\newcommand{\defn}[1]{\emph{\textbf{\boldmath #1}}}
\newcommand{%
 \begin{figure}[t]%
 \begin{center}%
  \input{.pdf_tex}%
 \end{center}%
 \caption{}%
 \label{fig:}%
 \end{figure}%
}[2]{%
 \begin{figure}[t]%
 \begin{center}%
  \input{#1.pdf_tex}%
 \end{center}%
 \caption{#2}%
 \label{fig:#1}%
 \end{figure}%
}
\newcommand{\E}[1]{\mathrm{E}\left[#1\right]}
\newtheorem{lemma}{Lemma}
\newtheorem{theorem}{Theorem}
\newtheorem{corollary}{Corollary}
\newcommand{\boa}{BOA}
\newcommand{\Dinsert}[3]{\textsc{insert}$(#1,#2,#3)$}
\newcommand{\Ddelete}[2]{\textsc{delete}$(#1,#2)$}
\newcommand{\Dquery}[2]{\textsc{query}$(#1,#2)$}
\newcommand{\Dsucc}[2]{\textsc{succ}$(#1,#2)$}
\newcommand{\Dpred}[2]{\textsc{pred}$(#1,#2)$}
\newcommand{\calS}{\mathcal{S}}
\newcommand{\calU}{\mathcal{U}}
\newcommand{\calV}{\mathcal{V}}
\newcommand{\bet}{B$^{\varepsilon}$-tree}
\newcommand{\grow}{\lambda}
\def\hyperspc{\kern -0.22em}
\newcommand{\IInsist}[1]{\smallskip\noindent{\sffamily\normalsize\bfseries #1.}}
\begin{document}

\title{Optimal Hashing in External Memory\thanks{Supported by NSF CCF 1637458, NIH 1 U01 CA198952-01, a NetAPP Faculty Fellowship and a gift from Dell/EMC.}}

\author{Alex Conway\\Rutgers
  University\\\texttt{alexander.conway@rutgers.edu} 
\and
Mart\'{\i}n Farach-Colton\\Rutgers University
\\\texttt{farach@rutgers.edu}
\and
Philip Shilane\\Dell EMC \\texttt{shilane@dell.com}}

\maketitle
\begin{abstract}
	Hash tables are a ubiquitous class of dictionary data structures. However,
	standard hash table implementations do not translate well into the external
	memory model, because they do not incorporate locality for insertions.

	Iacono and P\u atra\c su established an update/query tradeoff curve for
	external hash tables: a hash table that performs insertions in
	$O(\lambda/B)$ amortized IOs requires $\Omega(\log_\lambda N)$ expected IOs
	for queries, where $N$ is the number of items that can be stored in the
	data structure, $B$ is the size of a memory transfer, $M$ is the size of
	memory, and $\lambda$ is a tuning parameter.
	
	They provide a hashing data structure that meets this curve for $\lambda$
	that is $\Omega(\log\log M + \log_M N)$.  Their data structure, which
	we call an \defn{IP hash table}, is complicated and, to the best of our
	knowledge, has not been implemented.

	In this paper, we present a new and much simpler optimal external memory
	hash table, the \defn{Bundle of Arrays Hash Table} (BOA). BOAs
	are based on size-tiered LSMs, a well-studied data structure, and are
	almost as easy to implement. The BOA is optimal for a
	narrower range of $\lambda$.  However, the simplicity of BOAs
	allows them to be readily modified to achieve the following results:

\begin{itemize}
	\item A new external memory data structure, the \defn{Bundle of Trees Hash
		Table} (BOT), that matches the performance of the IP hash table, while
		retaining some of the simplicity of the BOAs.
	\item The \defn{cache-oblivious Bundle of Trees Hash Table} (COBOT), the
		first cache-oblivious hash table. This data structure matches the
		optimality of BOTs and IP hash tables over the same range of
		$\lambda$.
\end{itemize}

\end{abstract}

\section{Introduction} 

Dictionaries are among the most heavily used data structures. A dictionary
maintains a collection of key-value pairs $\calS\subseteq \calU\times \calV$, under
operations\footnote{We do not consider dictionaries that also support the
	\Dsucc{x}{\calS} and \Dpred{x}{\calS}.  \Dsucc{x}{\calS} return $\min\{y|
	y>x \land y\in\calS\}$ and \Dpred{x}{\calS} is defined symmetrically.}
\Dinsert{x}{v}{\calS}, \Ddelete{x}{\calS}, and \Dquery{x}{\calS}, which returns
the value corresponding to $x$ when $x\in S$.  When data fits in memory, there
are many solutions to the dictionary problem.

When data is too large to fit in memory, comparison-based dictionaries can be
quite varied.  They include the
\bet~\cite{DBLP:conf/soda/BrodalF03}, the write-optimized skip
list~\cite{DBLP:conf/pods/BenderFJMMPX17}, and the cache-optimized look-ahead array
(COLA)~\cite{DBLP:conf/spaa/BenderFFFKN07,DBLP:journals/pvldb/BenderFJKKMMSSZ12,DBLP:conf/esa/BenderCDF02}.
All of these data structures have been deployed extensively, and their best
variants are optimal in the \defn{external-memory comparison model} in
that they match the bound established by Brodal and
Fagerberg~\cite{DBLP:conf/soda/BrodalF03} who showed that for any
dictionary in this model, if insertions can be performed in
$O\left(\frac{\grow\log_{\grow} N}{B}\right)$ amortized IOs, then there exists
a query that requires at least $\Omega(\log_\grow N)$ IOs.  This trade off has
since been extended in several
ways~\cite{DBLP:conf/soda/AfshaniBFFGT17}.

In this paper, we consider the dictionary problem without restriction to the
comparison model, and in particular, we consider external-memory hashing.  This
allows for a better insertion/query trade off.  Iacono and P\v{a}tra\c{s}cu
showed:
\begin{theorem}[\cite{DBLP:journals/corr/abs-1104-2799}]\label{thm:ip-lower-bound}
	If insertions into an external memory dictionary can be performed in
	$O\left(\grow/B\right)$ amortized IOs, then queries require an
	expected $\Omega(\log_\grow N)$ IOs.
\end{theorem}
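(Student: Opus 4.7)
The plan is to prove this via a chronogram-style information-theoretic argument, adapted from the Fredman--Saks technique for cell-probe dictionary lower bounds. First, fix a random input distribution: insert $N$ distinct uniformly random keys from a universe $[U]$ of polynomial size, then issue a query on a uniformly random inserted key. The hypothesis caps the total insertion IO cost at $O(N\lambda/B)$. Partition the insertions into $k = \Theta(\log_\lambda N)$ \emph{epochs} $E_0, \ldots, E_{k-1}$ of geometrically growing size, with $|E_i| = \Theta(\lambda^i)$; amortization forces epoch $E_i$ to perform $O(\lambda^{i+1}/B)$ IOs, and hence to write at most that many distinct blocks.

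The heart of the argument is a per-epoch claim: for every epoch $E_i$, the expected number of query IOs that read blocks whose most recent write occurred during $E_i$ is $\Omega(1)$. Granting this, linearity of expectation over the $k$ disjoint epochs immediately gives a total expected query cost of $\Omega(k) = \Omega(\log_\lambda N)$, which is exactly the theorem.

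To establish the per-epoch claim I would condition on all random choices outside $E_i$; the residual entropy of $E_i$'s insertions is $\Omega(\lambda^i \log(U/N))$ bits. Answer-relevant information about $E_i$ can live only in cells written during $E_i$ itself, or in cells written by later epochs that happened to read and re-encode $E_i$'s data. Later epochs $E_j$ with $j > i$ have IO budgets of only $O(\lambda^{j+1}/B)$; a careful inductive accounting from the latest epoch backward shows that they can launder only a bounded fraction of $E_i$'s entropy forward, leaving a constant fraction concentrated in $E_i$'s own $O(\lambda^{i+1}/B)$ blocks. A query that avoids those blocks therefore has constant error probability, so by Markov's inequality its expected number of reads from $E_i$-blocks is $\Omega(1)$.

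The main obstacle is exactly this inductive entropy accounting across epochs: later and much larger epochs could in principle re-encode earlier, smaller epochs' data into their own blocks, so one must bound the possible laundering carefully. Properly handling the $B$-item granularity of each IO and the adaptivity of the query (whose block addresses depend on the queried key) are the standard technical points; the cleanest path typically fixes an appropriate conditional distribution and converts constant error probability into constant expected IO via Markov's inequality. Once the per-epoch bound is secured, summing over the $\Theta(\log_\lambda N)$ epochs finishes the proof.
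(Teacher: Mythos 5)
This theorem is not proved in the paper; the authors cite it directly from Iacono and Patrascu, so there is no in-paper proof to compare against. Your proposal tries to reconstruct a chronogram-style argument, and the skeleton---geometric epochs, a per-epoch $\Omega(1)$ contribution to the query cost, summing to $\Omega(\log_\lambda N)$---is the right kind of argument for this statement. However, the proposal as written has a flaw in the epoch ordering that breaks the per-epoch claim. You set $|E_i| = \Theta(\lambda^i)$ with $E_0$ earliest and $E_{k-1}$ latest, so the epochs occurring \emph{after} $E_i$ in time are the larger epochs $E_{i+1},\ldots,E_{k-1}$; you then assert they ``can launder only a bounded fraction of $E_i$'s entropy forward.'' Under your indexing that cannot hold: $E_{i+1}$ alone has IO budget $O(\lambda^{i+2}/B)$ and can therefore write $O(\lambda^{i+2}\log U)$ bits, which strictly dominates $E_i$'s entropy of $\Theta(\lambda^i\log U)$ bits, so there is nothing preventing $E_{i+1}$ from re-encoding all of $E_i$ into freshly-written blocks. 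The chronogram technique requires the reverse ordering: $E_i$ must be the $\Theta(\lambda^i)$ insertions \emph{immediately preceding} $E_0\cup\cdots\cup E_{i-1}$ in time, so that the epochs occurring after $E_i$ have total size $O(\lambda^{i-1})$ and hence total IO budget $O(\lambda^i/B)$. Only then is the amount of state writable after $E_i$ small enough relative to $E_i$'s entropy for the information-theoretic step to have any traction.

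Even with the ordering fixed, two difficulties remain that the proposal waves past. First, the post-$E_i$ budget ($O(\lambda^i/B)$ blocks, i.e.\ $O(\lambda^i\log U)$ bits) is within a constant factor of $E_i$'s entropy, so the laundering step is not settled by a crude bit count; it is precisely the technical core of the Iacono--Patrascu argument, and ``a careful inductive accounting'' names the obstacle rather than discharging it. Second, amortization bounds the total cost of a \emph{prefix} of insertions, not the cost of an individual epoch; obtaining a per-epoch budget requires charging the $O(\lambda\cdot|E_i\cup\cdots|/B)$ prefix cost to $E_i$ and verifying this is compatible with conditioning on the other epochs. As written, the proposal correctly identifies the shape of the argument but is incomplete---and with the stated epoch ordering, incorrect---at its crux.
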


They further describe an external-memory hashing algorithm, which we refer to
here as the \defn{IP hash table}, that performs insertions in
$O\left(\frac{1}{B}\left(\lambda + \log_{\frac{M}{B}} N + \log\log
N\right)\right)$ IOs and queries in $O(\log_\lambda N)$ IOs w.h.p. Therefore,
for $\lambda=\Omega\left(\log_{M/B}N + \log\log{N}\right)$, the IP hash table
meets the tradeoff curve of \Cref{thm:ip-lower-bound} and is thus optimal.

In external memory hashing, we can assume that keys are hashed, that
is, that they are uniformly distributed and satisfy some
independence properties. The IP hash table and the following results assume
that the hash function (and therefore the hashed keys) used is $\Theta(\log
N)$-independent.

The base result of this paper is a simple external-memory
hashing scheme, the \defn{Bundle of Arrays Hash Table} (BOA),
that meets the optimal \Cref{thm:ip-lower-bound} trade off curve for
large enough $\grow$. Specifically, we show:

\begin{restatable}{theorem}{boacost}\label{thm:boa-cost}
	A \boa{} with growth factor $\grow$ supports $N$ insertions with amortized
	per entry insertion cost of $O\left(\left(\grow + \log_{\frac{M}{B}} N +
	\log_\grow{N} \right)/B\right)$ IOs and query cost of $O(\log_\grow{N})$
	IOs w.h.p.
\end{restatable}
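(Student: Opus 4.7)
The plan is to analyze the \boa{} as a size-tiered LSM augmented with hash-based indexing inside each run. I would first fix the assumed structure: keys are hashed, items are distributed over $L = O(\log_\grow N)$ levels; each level $\ell$ holds up to $\grow$ runs whose size grows by a factor of $\grow$ from one level to the next; each run is laid out sorted by hash (so a hash translates directly into a block offset) and is equipped with an in-memory Bloom filter tuned to false-positive rate $O(1/\grow)$. Insertions accumulate in a small in-memory buffer that, when full, is flushed as a new run at level $0$; once level $\ell$ accumulates $\grow$ runs, those runs are merged into a single run at level $\ell+1$.

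For the insertion bound, I would amortize against merge work: each item participates in at most one merge per level, so at most $O(\log_\grow N)$ merges over its lifetime. When $\grow \leq M/B$, a $\grow$-way merge at level $\ell$ is one streaming pass of amortized cost $O(1/B)$ per item per level, summing to $O(\log_\grow N / B)$ per item overall. When $\grow > M/B$, the merge requires $O(\log_{M/B} \grow)$ external-sort passes per level; multiplied by the $\log_\grow N$ levels this yields $O(\log_{M/B} N / B)$ per item. The additive $\grow / B$ absorbs the per-insertion cost of buffering and flushing to level $0$, together with the overhead of rebuilding auxiliary in-memory metadata proportional to $\grow$ (such as Bloom-filter parameters) for each freshly formed run.

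For the query bound, I would argue that at each level the queried item lives in at most one of the (at most) $\grow$ runs, while each other run triggers a disk read only if its Bloom filter returns a false positive. Since each Bloom filter fails with probability $O(1/\grow)$, the expected false-positive count per level is $(\grow-1)\cdot O(1/\grow) = O(1)$, for an expected total of $O(\log_\grow N)$ IOs per query. The in-memory Bloom-filter lookups themselves contribute no IOs, so the read cost is dominated by the real hit plus the false-positive hits.

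The main obstacle is upgrading this expectation to a w.h.p.\ statement under only $\Theta(\log N)$-wise independence of the hash function. A direct Chernoff bound does not apply, so I would instead compute a high moment---specifically the $\Theta(\log N)$-th moment---of the sum of false-positive indicators across all runs and levels and then convert it to a polynomially small tail bound via Markov's inequality; this is exactly the regime in which the paper's $\Theta(\log N)$-independence assumption is calibrated. A secondary, mostly bookkeeping task is making the three additive insertion terms $\grow/B$, $\log_\grow N/B$, and $\log_{M/B} N/B$ fall out cleanly at the threshold between single-pass and multi-pass merges (where $\grow^\ell \approx M/B$) and at the smallest levels where individual runs may be sub-block.
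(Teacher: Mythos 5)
Your proposal has a genuine gap: it rests on the assumption that every run at every level carries an \emph{in-memory} Bloom filter, and that querying these filters costs no IOs. That assumption is not available in the external-memory model. A Bloom filter with false-positive rate $O(1/\grow)$ needs $\Theta(\log\grow)$ bits per entry, so across all runs the filters occupy $\Theta(N\log\grow)$ bits; for $N$ large relative to $M$ these do not fit in memory. Once the filters move to disk, probing $\grow$ of them at each of $\log_\grow N$ levels costs $\Theta(\grow\log_\grow N)$ IOs per query (and each Bloom-filter probe is itself several random accesses), which blows past the claimed $O(\log_\grow N)$ bound. Relatedly, the $\grow/B$ term in the insertion cost cannot be attributed to ``buffering and flushing to level 0'' or to rebuilding in-memory metadata --- that work is $O(1/B)$ amortized and is already covered by the merging terms. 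The $\grow/B$ term is a real on-disk cost that your accounting has nowhere to charge.

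The paper's remedy is the \emph{routing filter} of \Cref{sec:boa-routing-filter}, which is precisely the ingredient your proposal is missing. Instead of one per-run membership filter probed independently, each level keeps a single on-disk array indexed by a length-$h_\ell$ prefix of the fingerprint, storing which run last contributed a fingerprint with that prefix; the runs themselves chain colliding prefixes via a one-character ``previous'' field. Because fingerprints are uniformly distributed and runs are laid out as in \Cref{lem:interpolation-search}, the prefix determines the block offset, so a routing-filter lookup and each chained run probe costs $O(1)$ IOs. With routing-table ratio $\beta=\grow$, the expected chain length per level is $O(1)$ (this is \Cref{lem:boa-cost-level}), recovering $O(\log_\grow N)$ query IOs. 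The filter is maintained sequentially during merges and has $\beta$ characters per new fingerprint, which is where the $\grow/B$ insertion term actually comes from (\Cref{lem:boa-cost-beta}). Your merge-cost analysis ($\log_\grow N/B$ plus $\log_{M/B}N/B$ from multi-pass merges) and the observation that sorted-by-hash layout gives constant-IO run probes both agree with the paper; the gap is confined to how the filters are realized in external memory, and that is the heart of the construction.
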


Thus BOAs are optimal for $\grow = \Omega(\log_{\frac{M}{B}} N +
\log_\grow{N})$. 
They are readily modified to provide several variations, the
most important of which is the \defn{Bundle of Trees Hash Table}
(BOT). BOTs are
optimal for the same range of $\grow$ as the IP hash table:

\begin{restatable}{theorem}{botcost}\label{thm:hash-tree-cost}
	A BOT with growth factor $\grow$ supports $N$ insertions with amortized per
	entry insertion cost of $O\left(\left(\grow + \log_{\frac{M}{B}} N
	+ \log\log M\right)/B\right)$ IOs and a query cost of
	$O(\log_\grow{N})$ IOs w.h.p.
\end{restatable}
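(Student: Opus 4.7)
\medskip
\noindent\textbf{Proof plan.} The plan is to obtain the BOT by modifying only the internal organization of each bundle in the BOA of Theorem~\ref{thm:boa-cost}, so that the $\log_\grow N$ term in the insertion bound is replaced by $\log\log M$ while the $O(\log_\grow N)$ query bound is preserved. Everything outside a bundle---the number of bundles, the geometric size schedule, and the amortized cost of rewriting merged bundles---is inherited from the BOA analysis, so only the per-bundle guarantees need new arguments.

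The substitution I would make is to replace each bundle's flat array with a shallow, high-fanout tree over the hashed keys, with fanout $\Theta(M^{1/c})$ for a small constant $c$. Any bundle of size at most $N$ then has depth $O(\log\log M)$. Because the hash function is $\Theta(\log N)$-independent, the subtree sizes at each internal node concentrate sharply around their expectations, so a query follows one root-to-leaf path and touches $O(1)$ blocks w.h.p. Summing over the $O(\log_\grow N)$ non-empty bundles recovers the query bound.

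For insertions, I would charge two pieces. The BOA's size-tiered LSM machinery contributes $O((\grow + \log_{M/B} N)/B)$ amortized IOs per key unchanged, since we still hash-partition and rewrite bundles on the same schedule. Building the new tree directory for a rebuilt bundle of size $s$ costs $O(s\log\log M/B)$ IOs---one pass over the data plus $O(\log\log M)$ directory levels whose sizes form a geometric series dominated by $s/B$. The key point is that, with tree-organized bundles, the per-level scanning overhead that produced the $\log_\grow N/B$ term in the BOA is replaced, not augmented, by this directory-construction cost, which amortizes into the claimed $O((\grow + \log_{M/B} N + \log\log M)/B)$.

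The main obstacle will be making that last accounting rigorous: showing that the directory rebuild cost does not silently acquire an extra $\log_\grow N$ factor from the number of levels through which a key is rewritten, and that the $O(1)$-IO per-bundle query really does hold uniformly under only $\Theta(\log N)$-wise independence. Chernoff-style tail bounds for $k$-wise independent variables, applied level-by-level within each tree and then union-bounded over the $O(\log_\grow N)$ bundles and all $N$ queries, supply the concentration for the query bound; the insertion amortization then needs to exploit the same structural reuse across merges that the BOA already relies on, so that the per-key directory cost sums telescopically rather than multiplicatively in the number of levels.
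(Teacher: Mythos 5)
Your proposal takes a fundamentally different route from the paper, and unfortunately it contains a gap that goes to the heart of why the $\log\log M$ term arises.

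First, the arithmetic in the tree-depth claim does not work. With fanout $\Theta(M^{1/c})$ and a bundle of size up to $N$, the tree depth is $\Theta(\log_{M^{1/c}} N) = \Theta(\log_M N)$, not $O(\log\log M)$. These coincide only when $\log N = O(\log M \log\log M)$, which is not a standing assumption. More importantly, in the paper the $\log\log M$ term has nothing to do with the depth of any search structure. It comes from the \emph{character queue}, a component absent from your proposal. The paper keeps the fingerprint--value pairs in an unsorted log and never materializes sorted arrays per level; instead it maintains, per level, a collection of ``series'' of runs that hold future prefix characters, merged according to a ruler-function schedule (series $\sigma_j$ batches $2^{\rho(j)}$ characters). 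Summing the merge cost over levels gives $\sum_j 2^{\rho(j)}\lceil\log_{M/B}\lambda^{2^{\rho(j)}}\rceil$; the first $\log\log M$ ruler steps each cost one pass (since $\lambda^{2^k}<M$ there), and the rest form a geometric series dominated by $\log_{M/B} N$. That is the source of $\log\log M$. Your scheme, which rebuilds a sorted directory per level, still pays a per-level merge, and as you yourself flag, the resulting bound picks up an extra $\log_\grow N$ factor that your proposal does not actually remove---the ``telescoping'' you gesture at is exactly the character-queue schedule, and it requires deferring sorting rather than rebuilding a per-bundle tree.

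Second, the query analysis also does not match what the data structure must cope with. Because the BOT routes queries down a recursive tree of routing filters over an \emph{unsorted} log (rather than doing interpolation search into a sorted bundle), a prefix false positive at height $h$ automatically matches at height $h-1$, so false positives would propagate all the way to the log. The paper suppresses this with two extra per-sketch characters (a \emph{check character} and a \emph{next character}) and an ordered, delta-encoded refined routing filter, yielding $O(1/\lambda)$ expected extra node accesses per root false positive (\Cref{lem:routing-tree-false-positive}); the w.h.p.\ bound then follows from \Cref{lem:chernoff} with $\delta=\lambda$. Your ``one root-to-leaf path touches $O(1)$ blocks w.h.p.\ per bundle'' statement implicitly assumes a sorted bundle with a deterministic search path, which is the BOA setting, not the BOT. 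So the proposal is neither equivalent to the paper's proof nor a complete alternative: it is missing the character queue (which supplies the $\log\log M$ term and removes the extra $\log_\grow N$), and it is missing the check/next-character machinery needed for the claimed query bound in a log-based structure.
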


We further introduce the first cache-oblivious hash table, the
\defn{Cache-Oblivious Bundle of Trees Hash Table} (COBOT), which
matches the IO performance of BOT and IP hash tables.

The remainder of the paper is organized as follows. \Cref{sec:hashing}
describes the the properties of the hash function we use and \Cref{sec:lsm}
provides background on size-tiered LSMs. \Cref{sec:boa} introduces the BOAs
and \Cref{sec:bot} describes its variant, the BOT. \Cref{sec:cacheoblivious}
adapts the BOT to the cache-oblivious model, resulting in the COBOT.

\section{Preliminaries}
\subsection{Fingerprints and Hashing}\label{sec:hashing}
In order to achieve our bounds, we need $\Theta(\log N)$-wise
independent hash functions, which, once again matches IP hash tables.
We note that a $k$-wise independent hash function is also $k$-wise independent on
individual bits.  Furthermore, the following Chernoff-type bound holds:

\begin{lemma}[\cite{DBLP:journals/siamdm/SchmidtSS95}]\label[lemma]{lem:chernoff}
	Let $X_1,X_2,\ldots,X_N$ be $\lceil\mu\delta\rceil$-wise independent binary
	random variables, $X=\sum_{i=1}^N X_i$ and $\mu = \textrm{E}[X]$. Then
	\[\textrm{Pr}(X > \mu\delta) = O\left(\frac{1}{\delta^{\mu\delta}}\right),\]
	for sufficiently large $\delta$.
\end{lemma}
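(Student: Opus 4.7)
The plan is the $k$-th moment method with $k = \lceil\mu\delta\rceil$, chosen to exactly match the independence hypothesis. By Markov's inequality applied to $X^k$,
\[
\Pr(X > \mu\delta) \le \frac{E[X^k]}{(\mu\delta)^k},
\]
so it suffices to bound $E[X^k]$ sharply.

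The key step is a reduction to the fully mutually independent case. Expanding $X^k = \sum_{(i_1,\ldots,i_k)} X_{i_1}\cdots X_{i_k}$ by the multinomial theorem, every summand involves at most $k$ of the $X_i$; since $X_i \in \{0,1\}$ implies $X_i^r = X_i$, each product collapses to one over a set of distinct indices of size at most $k$. By the $k$-wise independence hypothesis, the expectation of any such product factors as $\prod_j E[X_{i_j}]$, so the value of $E[X^k]$ is identical to what it would be if the $X_i$ were fully mutually independent Bernoullis with the same means.

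In the fully-independent case, classical moment-generating-function analysis gives $E[e^{tX}] \le e^{\mu(e^t-1)}$, from which extracting the $k$-th Taylor coefficient (or, equivalently, rewriting $X^k$ via Stirling numbers of the second kind as a sum of falling factorials $X(X-1)\cdots(X-j+1)$ whose expectations are at most $\mu^j$) yields a bound of the form $E[X^k] \le k!\,B_k(\mu)$ with $B_k$ the $k$-th Bell polynomial. Combined with Markov and Stirling's approximation at $k = \mu\delta$, this produces the claimed tail $O(1/\delta^{\mu\delta})$ once $\delta$ is large enough to absorb the lower-order constants.

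The main obstacle is the careful bookkeeping needed to bound $E[X^k]$ tightly enough to match the exponent $\mu\delta$ in the denominator. The delicate regime is $k > \mu$, where the dominant contribution to $B_k(\mu)$ comes from multinomial terms with many repeated indices; choosing the saddle-point parameter optimally in the MGF, or equivalently summing the falling-factorial expansion carefully, is the combinatorial heart of the cited Schmidt--Siegel--Srinivasan paper.
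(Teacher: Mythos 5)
The paper does not prove this lemma; it is quoted verbatim as a black-box citation to Schmidt, Siegel, and Srinivasan~\cite{DBLP:journals/siamdm/SchmidtSS95}, so there is no in-paper argument to compare against. That said, your reconstruction is a faithful summary of the SSS approach: apply Markov's inequality to $X^k$ with $k=\lceil\mu\delta\rceil$, observe that $E[X^k]$ depends only on the joint distribution of $k$-subsets of the variables (so $k$-wise independence reduces to the fully independent case), and then control $E[X^k]$ via the Touchard/Bell polynomial $T_k(\mu)=\sum_j S(k,j)\mu^j$, using Stirling's approximation at $k\approx\mu\delta$.

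One small imprecision worth flagging: the step ``$E[e^{tX}]\le e^{\mu(e^t-1)}$, from which extracting the $k$-th Taylor coefficient'' is not by itself a valid inference, since an inequality between two moment-generating functions does not transfer term-by-term to their Taylor coefficients. Your parenthetical alternative is the one that actually closes the gap: expand $X^k$ in falling factorials via Stirling numbers of the second kind, $X^k=\sum_j S(k,j)\,X(X-1)\cdots(X-j+1)$, and bound $E[X(X-1)\cdots(X-j+1)]=j!\sum_{|S|=j}\prod_{i\in S}p_i\le\mu^j$, which only requires $k$-wise independence and is rigorous. Also, the bound you want is $E[X^k]\le T_k(\mu)$, not $k!\,T_k(\mu)$; the extra factorial looks like a slip in translating between ordinary and exponential generating-function normalizations. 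With those two corrections, the sketch is sound and, to the best of my knowledge, matches the argument in the cited source.
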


In what follows, if the following, we use \defn{fingerprint} to refer
to any key that has been hashed using a
$\Theta(\log N)$-wise independent hash function.Such hash functions have a compact
representation and can be specified using $\Theta(\log N)$ words. The universe
that is hashed into is assumed to have size $\Theta(N^k)$ for $k\geq2$. We ignore
collisions, but these can be handled as
in~\cite{DBLP:conf/soda/IaconoP12}.

\subsection{Log-structured Merge Trees}\label{sec:lsm}

Log-structured merge trees (LSMs) are (a family of) external-memory
dictionary data structures.  They come in two varieties:
\defn{level-tiered LSMs} (LT-LSMs) and \defn{size-tiered LSMs}
(ST-LSMs).  Both kinds are suboptimal in that they do not meet the
optimal insertion-query
tradeoff~\cite{DBLP:conf/soda/BrodalF03}, although the
COLA~\cite{DBLP:conf/spaa/BenderFFFKN07} is an optimal variant of the
LT-LSMs.  Both are popular in
practice~\cite{Ellis11,LevelDB14,RocksDB14,WiredTiger14,DBLP:conf/sigmod/SearsR12,DBLP:conf/usenix/RenG13,HBase14,DBLP:journals/tocs/ChangDGHWBCFG08,DBLP:journals/sigops/LakshmanM10,Accumulo}.

An LSM consists of sets of either B-trees or sorted arrays called \defn{runs}.
In this paper, we describe them in terms of runs, since we use runs below.

An LT-LSM consists of a cascade of levels, where each level consists
of at most one run.  Each level has a \defn{capacity} that is $\grow$
times greater than the level below it, where $\grow$ is called the
\defn{growth factor}. \footnote{Sometimes this and related structures
  are analyzed with a growth factor of $B^\epsilon$. The two are
  equivalent. We use $\grow$ rather than $\epsilon$ as the tuning
  parameter for consistency with the external-memory hashing
  literature.}  When a level reaches capacity, it is merged into the
next level (perhaps causing a merge cascade).  The amortized IO cost
for insertions is small because sequential merging is fast, although
each item will participate in $\grow/2$ merges on average.  The IO
cost for a query is high because a query must be performed indepdently
on each of $O(\log_\grow N)$ levels (although Bloom
filters~\cite{DBLP:journals/cacm/Bloom70,DBLP:journals/pvldb/BenderFJKKMMSSZ12}
are used in practice to mitigate this cost).

A ST-LSM further improves insertion IOs at the expense of queries.
Each level contains less than $\grow$ runs.  Every run on a given
level has the same size, which is $\grow$ times larger than the runs
on the level beneath it.
When $\grow$ runs are present at a level, they are merged
into one run and placed at the next level.  There are therefore
$O(\log_{\grow} N)$ levels.  Insertions are faster than in LT-LSMs
because each item is only merged once on each level.  Queries are
slower because each query must be perform $O(\grow)$ times at each
level, since the runs on each level are independent.

\section{Bundle of Arrays Hashing}\label{sec:boa}

A \defn{Bundle of Arrays Hash Table} (\boa{}) is an external hash
table based on ST-LSMs. As a first step, we show that runs with
uniformly distributed keys---for example, hashed
keys---can be searched quickly in external memory. This immediately
improves the query performance of ST-LSMs.

Interpolation search is not enough to achieve the needed improvement.
We show a balls-and-bins analysis that allows for the needed speedup.

\begin{lemma}[\cite{johnson1977urn}]\label{lem:fullestbucket}
	If $N$ balls are thrown into $Q = \Theta(N/\log{N})$ bins uniformly and
	i.i.d., then every bin has $\Theta(N/Q) = \Theta(\log{N})$ balls with high
	probability.
\end{lemma}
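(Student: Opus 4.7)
The plan is a standard balls-and-bins analysis: control the load of each bin individually with a Chernoff-type tail bound, then union-bound over the $Q$ bins. Fix a bin $i$ and write its load as $X_i = \sum_{j=1}^N Y_{i,j}$, where $Y_{i,j}$ is the indicator that ball $j$ lands in bin $i$. Because the balls are independent and uniform, the $Y_{i,j}$ are i.i.d.\ Bernoulli$(1/Q)$, so $\mu := \E{X_i} = N/Q = \Theta(\log N)$.

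For the upper tail, I would apply \Cref{lem:chernoff} to the $Y_{i,j}$ with a sufficiently large constant $\delta$, obtaining $\Prob{X_i > \delta\mu} = O(\delta^{-\delta\mu}) = N^{-\Omega(1)}$. By enlarging $\delta$, the exponent can be made an arbitrarily large constant, so the per-bin failure probability can be driven below $1/N^c$ for any $c>1$. For the lower tail, the analogous Chernoff bound gives $\Prob{X_i < \mu/2} \leq e^{-\Omega(\mu)} = N^{-\Omega(1)}$, which can similarly be made arbitrarily small.

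Union-bounding over all $Q \leq N$ bins then shows that, with high probability, every bin has load in $[\mu/2,\,\delta\mu] = \Theta(\log N)$, as claimed.

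The one technical point to watch is that the Chernoff decay rate is $e^{-\Theta(\mu)} = N^{-\Theta(1)}$ while the union bound spans $Q = \Theta(N/\log N)$ bins, so the deviation constants must be chosen large enough that the per-bin failure probability beats the union bound. This is always possible because $\mu = \Theta(\log N)$, and it is the only place where the specific scaling of $Q$ enters the argument; I do not foresee any real obstacle beyond this constant-tuning.
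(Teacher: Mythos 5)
The paper does not prove this lemma itself; it is cited from Johnson and Kotz's urn-models text, so there is no internal proof to compare against. Your Chernoff-plus-union-bound argument is the standard way to establish it, and the overall structure is sound, but one step is shakier than you acknowledge: the claim that the lower-tail failure probability ``can similarly be made arbitrarily small'' does not hold in the same way as for the upper tail. Enlarging $\delta$ drives the upper-tail exponent $\delta\mu\log\delta$ up without bound, but the lower-tail Chernoff bound $\Prob{X_i < (1-\epsilon)\mu} \leq e^{-\epsilon^2\mu/2}$ is capped at roughly $e^{-\mu}$ as $\epsilon \to 1$, so the per-bin lower-tail failure probability is at best $N^{-c}$ where $c$ is the hidden constant in $\mu = N/Q = c\log N$, which you do not get to choose. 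If that constant is below $1$, the union bound over $Q = \Theta(N/\log N)$ bins fails, and indeed by coupon-collector reasoning some bin is empty w.h.p.\ when $N < Q\ln Q$. So the lemma as stated is only valid with an implicit assumption that the constant in $\mu = \Theta(\log N)$ is sufficiently large; the application in \Cref{lem:interpolation-search} supplies this since $B = \Omega(\log N)$ with a freely chosen constant. Your proof would be airtight if you replaced ``can similarly be made arbitrarily small'' with an explicit appeal to the hidden constant in $N/Q$ being large enough, i.e., $\mu \geq c\log N$ for $c > 2$, say.

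A secondary note: \Cref{lem:chernoff} as stated in the paper is an upper-tail bound only, so citing it for the lower tail is imprecise; but since the lemma's hypothesis is full i.i.d.\ (not limited independence), you can simply invoke the standard two-sided Chernoff bound and avoid \Cref{lem:chernoff} entirely.
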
 

\begin{lemma}\label{lem:interpolation-search}
	Let $A$ be a sorted array of $N$ uniformly distributed keys in the range
	$[0,K)$, and assume $B=\Omega(\log N)$. Then $A$ can be written to external
	memory using $O(N)$ space and $O(N/B)$ IOs so that membership in $A$ can be
	determined in $O(1)$ IOs with high probability.
\end{lemma}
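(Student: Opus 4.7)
The plan is to partition the key range $[0,K)$ into $Q = \Theta(N/\log N)$ equal-width buckets and write the array so that every bucket occupies a fixed-size, contiguous slot of external memory at a predictable offset. A query for $x$ then just computes the bucket index $i = \lfloor xQ/K \rfloor$ arithmetically, seeks to slot $i$, and searches locally; since no traversal of an index is needed, the cost is dominated by the size of a single slot.

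The key step is a load-balancing argument: since the keys are uniformly distributed and $Q = \Theta(N/\log N)$, \Cref{lem:fullestbucket} implies that every bucket receives $\Theta(\log N)$ keys with high probability. I would then pick a constant $c$ large enough that no bucket exceeds $c\log N$ keys w.h.p., and pad every bucket to exactly $c\log N$ slots (marking unused entries empty). Because $B = \Omega(\log N)$, a single padded bucket fits in $O(1)$ blocks, so a query reads $O(1)$ blocks and scans them in memory. The total space is $Q \cdot c\log N = O(N)$, as required.

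For the write phase, since $A$ is already sorted and the buckets correspond to contiguous ranges of keys, a single streaming pass suffices: read $A$ sequentially, assign each key to its bucket (determined by its prefix), and emit the padded layout to external memory. The scan reads $O(N/B)$ blocks, and since the output is also written sequentially bucket by bucket, the total IO cost of construction is $O(N/B)$.

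The one technical point that needs care is the high-probability bound on the \emph{maximum} bucket load, not just on a fixed bucket. Here I would combine \Cref{lem:fullestbucket} with a union bound over the $Q$ buckets, appealing to the Chernoff-type tail in \Cref{lem:chernoff} that is available from the $\Theta(\log N)$-wise independence of the hash function; tuning the constant $c$ in $\delta = c$ makes the per-bucket overflow probability $1/\operatorname{poly}(N)$, which survives the union bound. This is the main (minor) obstacle, and it is precisely where the $\Theta(\log N)$-wise independence assumption of \Cref{sec:hashing} is used; everything else is a direct consequence of interpolating the bucket index.
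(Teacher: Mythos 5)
Your proposal is correct and follows essentially the same approach as the paper: partition the range into equal-width buckets, invoke \Cref{lem:fullestbucket} to bound every bucket's load with high probability, pad each bucket to a common size, and answer a query by arithmetically computing the bucket index and reading $O(1)$ blocks. The only cosmetic differences are that the paper uses $N/B$ buckets (each holding $\Theta(B)$ keys) rather than your $\Theta(N/\log N)$ buckets (each holding $\Theta(\log N)$ keys, still $O(1)$ blocks since $B = \Omega(\log N)$), and the paper leaves the maximum-load bound to the cited lemma rather than spelling out the union bound over Chernoff tails as you do -- a point where your extra care about limited independence is reasonable, though \Cref{lem:fullestbucket} already states its conclusion over all bins.
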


\begin{proof}
	Divide the range of keys into $N/B$ uniformly sized buckets; that is,
	bucket $i$ contains keys in the range $[(i-1)KB/N,iKB/N)$. Because the keys
	in $A$ are distributed uniformly, and $B= \Omega(\log N)$, every bucket
	contains $\Theta(B)$ keys with high probability by
	\Cref{lem:fullestbucket}. Let $F$ be the number of items in the fullest
	bucket, and write the keys in each bucket to disk in order using $F$ space
	for each.  Because $F = \Theta(B)$, this takes the desired space and IOs.

	Now, to find a key, compute which bucket it belongs to. A constant number
	of IOs will fetch that bucket, whose address is known because all buckets
	have the same size.
\end{proof}

When using a ST-LSM for hashing, we can use
\Cref{lem:interpolation-search} to speed up queries:

\begin{corollary}
	If a ST-LSM contains uniformly distributed keys and has growth factor
	$\grow$, then by writing the levels as in \Cref{lem:interpolation-search},
	queries can be performed in $O(\grow \log_{\grow} N)$ IOs. The insertion
	cost is unchanged: $O\left(\frac{1}{B}\left(\log_{\grow}
	N+\log_{\frac{M}{B}} N\right)\right)$ amortized IOs.
\end{corollary}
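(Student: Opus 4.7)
The plan is to exploit the fact that a ST-LSM with growth factor $\grow$ consists of $O(\log_\grow N)$ levels, each holding at most $\grow$ runs, giving $O(\grow \log_\grow N)$ runs overall. Since a query must be performed on each run independently, the corollary reduces to two claims: (i) every run admits queries in $O(1)$ IOs w.h.p.\ when written in the layout of \Cref{lem:interpolation-search}, and (ii) writing runs in that layout does not increase the standard ST-LSM insertion cost.

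For (i) I would apply \Cref{lem:interpolation-search} run by run. The hash function is $\Theta(\log N)$-wise independent, which is at least as strong as the independence needed to invoke \Cref{lem:fullestbucket} inside any single run, and the global assumption $B = \Omega(\log N)$ supplies the hypothesis of \Cref{lem:interpolation-search}. Each run of size $n \ge B$ then answers a query in $O(1)$ IOs with failure probability $1/\mathrm{poly}(N)$. Runs of size $n<B$ fit in $O(1)$ blocks and are searched trivially. Taking a union bound over the $O(\grow \log_\grow N)$ runs preserves the w.h.p.\ guarantee, yielding total query cost $O(\grow \log_\grow N)$ IOs.

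For (ii) I would argue that producing the bucketed layout of \Cref{lem:interpolation-search} from a sorted run of size $n$ is a single linear pass that copies each key into the correct equally sized bucket, costing $O(n/B)$ IOs—the same as writing the run as a flat sorted array. Since all merges therefore have the same IO cost as in the standard ST-LSM analysis, the familiar amortized bound $O\bigl(\bigl(\log_\grow N + \log_{M/B} N\bigr)/B\bigr)$ per insertion transfers unchanged, with the $\log_\grow N$ term coming from the number of levels each item participates in and the $\log_{M/B} N$ term from the cost of forming the initial runs.

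The main obstacle I anticipate is the probabilistic bookkeeping: verifying that the $\Theta(\log N)$-wise independence of the global hash function is enough to apply \Cref{lem:fullestbucket} simultaneously within every run, and disposing cleanly of small runs where $B = \Omega(\log n)$ may fail even though $B = \Omega(\log N)$ holds. Once the small-run case is absorbed into an $O(1)$ block read, the rest is a routine union bound.
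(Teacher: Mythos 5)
Your proof is correct and matches the paper's (unstated) intended argument: an ST-LSM has $O(\grow\log_\grow N)$ runs, each queried in $O(1)$ IOs by \Cref{lem:interpolation-search}, and the bucketed layout is produced in the same $O(n/B)$ IOs as a flat sorted run, so the insertion cost is unaffected. One small misattribution in your intuition for the insertion bound: the $\log_{M/B} N$ term arises from the $\lceil\log_{M/B}\grow\rceil$ passes needed to merge $\grow$ runs when $\grow > M/B$ (summed over $\log_\grow N$ levels), not from the cost of forming the initial runs.
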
 

While the query performance has improved by a factor of $\log{N}$, the ST-LSM
is still off the optimal tradeoff curve of \Cref{thm:ip-lower-bound}. In
particular, setting $\lambda' = \log_\grow N$ and assuming that $\lambda' =
\Omega \log_{\frac{M}{B}} N$, we see that queries are at least exponentially
slower than optimal. The \boa{} will utilize additional structure in order to
reduce this query cost.

\subsection{Routing Filters}\label{sec:boa-routing-filter}
The main result of this section is an auxiliary data structure, the
\defn{routing filter} that improves the query cost by a factor of $\grow$ by
further exploiting the hashing of keys. From this point forward, we assume
that all keys in the data structure have already been hashed by a pairwise
independent hash function, and we refer to these hashed keys as
\defn{fingerprints} to make this distinction clear.  This combination of
techniques will yield a hashing data structure that is optimal for
some choices of $\grow$. In subsequent sections, we show how to
achieve optimality for a wider range of $\grow$. 

The purpose of the routing filter is to indicate probabilistically, at each
level, which run contains the fingerprint we are looking for. Each level will
have its own routing filter, defined as follows. For each level $\ell$, let
$h_\ell$ be some number, to be specified below. It will be convenient for a
fingerprint $K$ to interpret the bits of $K$ as a string of $\log_\lambda$ bit
characters, $K=K_0K_1K_2\cdots$. Let $P_\ell(K)=K_{0}\cdots K_{h_\ell}$ be the
$h_\ell$th prefix, the concatenation of the characters of $K$ up to
$K_{h_\ell}$.  The routing filter $F_\ell$ for level $\ell$ is a
$\lambda^{h_\ell}$ character array, where $F_\ell[i] = j$ if the $j$th run
$R_{\ell,j}$ contains a
fingerprint $K$ such that the $P_\ell(K)=i$, and no later run $R_{\ell,j'}$
(i.e. with $j'>j$) contains such a fingerprint. 

We also modify each run $R_{\ell, j}$ during the merge so that each
fingerprint-value pair contains a \defn{previous field} of 1 additional
character used to specify the previous run containing a fingerprint with the
same prefix or $j$ to indicate no such run exists. Thus these fingerprint-value
pairs now form a singly linked list whose fingerprint share the same prefix,
and the routing filter points to the run containing the head.

During a query for a fingerprint $K$, first $F_\ell[P_\ell(K)]$ is checked to
find the latest run containing a fingerprint with a matching prefix. Once that
fingerprint-value pair is found, its previous field indicates the next run
which needs to be checked and so on until all fingerprints with matching prefix
in the level are found.

Such routing filters induce a space/cost tradeoff. The greater $h_\ell$ is, the
more space the table takes, but the less likely it is that many runs will have
fingerprints that collide on their prefixes. The rest of this section shows
that when $h_\ell$ is set to be the base $\lambda$ log of the capacity of level
$\ell$, then that yields an optimal external hash table.

Define $\beta$, the \defn{routing table ratio}, to be the ratio of the number
of buckets in the routing filter to the size of a run. The number of entries in
a run on level $\ell$ is $B\lambda^{\ell-1}$, so $\beta =
\lambda^{h_\ell}/B\lambda^{\ell-1}$. We analyze the per-level insertion and
query cost for a given $\beta$ and $\lambda$.

\begin{lemma}\label{lem:boa-cost-level}
	For a \boa{} with growth factor $\grow$ and routing table ratio $\beta$, the
	following hold:
	\begin{enumerate}
		\item Merging a level incurs
			$\Theta\left(\frac{1}{B}\left(1 + \log_{\frac{M}{B}}\grow +
			\beta\log_{N}{\grow}\right)\right)$ IOs per fingerprint.
		\item Finding a fingerprint in a level takes $\Theta\left(1 +
			\frac{\lambda}{\beta}\right)$ IOs in expectation.
	\end{enumerate}
\end{lemma}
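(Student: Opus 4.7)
My plan is to decompose the merge cost in part~(1) into an external merge-sort of the $\lambda$ runs at level $\ell$ plus the rebuild of the routing filter (including filling in the previous-field pointers), and to decompose the per-level query cost in part~(2) into a single $O(1)$-IO probe of $F_\ell[P_\ell(K)]$ followed by a walk through the previous-field linked list whose expected length is bounded by a balls-and-bins argument over the uniform prefixes.

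For part~(1), the $\lambda$ runs at level $\ell$ jointly hold $\Theta(B\lambda^\ell)$ entries, and standard $(M/B)$-way external merge-sort needs one pass when $\lambda = O(M/B)$ and $O(\log_{M/B}\lambda)$ passes otherwise, giving $O((1+\log_{M/B}\lambda)/B)$ IOs per entry. After the merge, level $\ell+1$ gains a fresh run of $\Theta(B\lambda^\ell)$ entries, and from $\beta = \lambda^{h_{\ell+1}}/(B\lambda^\ell)$ its new filter has $\beta B\lambda^\ell$ character slots of $\Theta(\log\lambda)$ bits each. Since each fingerprint-value pair is $\Theta(\log N)$ bits (the universe has size $\Theta(N^k)$), a block that carries $B$ fingerprints carries $\Theta(B/\log_N\lambda)$ filter slots, so writing the filter sequentially costs $\Theta(\beta\lambda^\ell \log_N\lambda)$ IOs in total, i.e.\ $\Theta(\beta\log_N\lambda/B)$ per entry. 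The previous-field values for the new run are obtained by probing the old $F_{\ell+1}$ as the merged output is emitted in sorted prefix order, and this scan cost does not exceed the filter-write cost; summing the three contributions gives part~(1).

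For part~(2), reading $F_\ell[P_\ell(K)]$ takes $O(1)$ IOs since the filter is a flat array indexed directly by $P_\ell(K)$. If the slot identifies a run $j$, \Cref{lem:interpolation-search} locates the contiguous prefix-$P_\ell(K)$ range within $R_{\ell,j}$ in $O(1)$ IOs, after which the previous field indicates the next run to visit, and so on. Each run visited costs $O(1)$ IOs, so the total IO count at the level is $O(1+R)$ where $R$ is the number of distinct level-$\ell$ runs containing at least one fingerprint with prefix $P_\ell(K)$. The level holds at most $\lambda B\lambda^{\ell-1}$ fingerprints, and by pairwise independence of the hash function each has prefix $P_\ell(K)$ with probability $\lambda^{-h_\ell}$, so $\mathrm{E}[R]$ is at most the expected number of matching fingerprints, which is $\lambda B\lambda^{\ell-1}/\lambda^{h_\ell} = \lambda/\beta$. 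The main obstacle is the filter-write accounting in part~(1), where one must carefully distinguish the $\Theta(\log\lambda)$-bit filter slots from the $\Theta(\log N)$-bit fingerprint-value pairs to obtain the correct $\log_N\lambda$ factor, which is precisely what keeps the filter-write term subleading relative to the merge-sort term unless $\beta$ is very large; part~(2), by contrast, is a short linearity-of-expectation calculation once uniformity of the hashed prefixes is invoked.
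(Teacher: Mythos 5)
Your proposal is correct and takes essentially the same route as the paper: part (1) decomposes into a $\lambda$-way external merge of the runs plus a sequential co-scan of the next level's routing filter, with the same bit-level accounting that a $\Theta(\log\lambda)$-bit filter slot costs a $\log_N\lambda$ fraction of a word, and part (2) is the same $O(1)$-IO filter probe followed by a linearity-of-expectation bound on prefix collisions using pairwise independence and \Cref{lem:interpolation-search}. The only cosmetic difference is that you bound the walk by the number of distinct matching runs $R$ whereas the paper bounds it by the (at least as large) number of matching fingerprints; both give $\lambda/\beta$ in expectation.
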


\begin{proof}
	\begin{enumerate}
		\item
			Merging a level requires merging together its runs as well as
			updating the next level's routing filter. Merging $\grow$ sorted
			arrays takes
			$\Theta\left(\frac{1}{B}\left(1 + \log_{M/B}\grow\right)\right)$
			IOs per fingerprint.
			
			The routing filter is updated by iterating through it and the new
			run sequentially. For each fingerprint $K$ appearing in the run,
			$F_{\ell+1}[P_{h_{\ell+1}}(K)]$ is copied to the previous field in the
			run, and $F_{\ell+1}[P_{h_{\ell+1}}(K)]$ is set to the number of the
			current run. Each entry in the routing filter is a character, the
			routing filter has $\beta$ characters for each new fingerprint, so
			it requires $\Theta\left(\frac{\beta}{B}\log_{N}{\grow}\right)$
			IOs per fingerprint to update sequentially. 

		\item
			To query for a fingerprint $K$, first the routing filter is
			checked, which takes $O(1)$ IOs, and then the runs with fingerprints
			matching the prefix of $K$ are checked.

			Given some enumeration of the fingerprints in level $\ell$, denote
			the $i$th fingerprint by $K_i$. Let $X_i$ be the indicator random
			variable which is 1 if $P_{h_\ell}(K)=P_{h_\ell}(K_i)$ and 0
			otherwise. Because the hash function is pairwise independent, $K$
			and $K_i$ are uniformly distributed and their bits are pairwise
			independent. Thus $\E{X_i} \leq \frac{1}{\lambda^{h_\ell}}$. The
			expected number of fingerprints in the level with prefix
			$P_{h_\ell}(K)$ is at most $\sum_{i=1}^{B \lambda^\ell}\E{X_i} \leq
			\frac{B \lambda^\ell}{\lambda^{h_\ell}} = \frac{\grow}{\beta}.$ By
			\Cref{lem:interpolation-search}, each of these fingerprints can be
			found and checked in $O(1)$ IOs. Thus, the expected per-level query
			cost is $O\left(1 + \frac{\grow}{\beta}\right)$.
	
	\end{enumerate}
\end{proof}

\begin{lemma}\label{lem:boa-cost-beta}
	A \boa{} with growth factor $\grow$ and routing table ratio $\beta$ has
	insertion cost $O\left(\frac{1}{B}\left(\beta + \log_{\frac{M}{B}} N +
	\log_\lambda N\right)\right)$ and query cost $O\left(\left(1 +
	\frac{\grow}{\beta}\right)\log_\grow{N}\right)$.
\end{lemma}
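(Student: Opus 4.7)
The plan is to obtain total insertion and query costs by summing the per-level costs from \Cref{lem:boa-cost-level} across all levels of the ST-LSM structure underlying the BOA. Since the $\ell$-th level of the BOA holds at most $\grow$ runs of $B\grow^{\ell-1}$ fingerprints each, the total number of levels needed to hold $N$ fingerprints is $O(\log_\grow N)$. Importantly, in a ST-LSM each fingerprint participates in exactly one merge per level it reaches, so the per-fingerprint insertion cost is just the sum of per-level merge costs, and the per-query cost is the sum of per-level query costs.

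For insertions, I would multiply the per-level merge cost
\[
\Theta\!\left(\tfrac{1}{B}\bigl(1 + \log_{M/B}\grow + \beta\log_N \grow\bigr)\right)
\]
by the number of levels $O(\log_\grow N)$ and distribute. The main calculation is to simplify the three resulting terms using the change-of-base identity: $\log_\grow N \cdot \log_{M/B}\grow = \log_{M/B} N$, and $\log_\grow N \cdot \log_N \grow = 1$. These two identities collapse the middle term to $\log_{M/B} N$ and the last term to $\beta$, while the first term contributes $\log_\grow N$. This yields the claimed insertion bound $O\!\left(\frac{1}{B}(\beta + \log_{M/B} N + \log_\grow N)\right)$.

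For queries, the per-level cost is $O(1 + \grow/\beta)$ in expectation by \Cref{lem:boa-cost-level}, so multiplying by the $O(\log_\grow N)$ levels immediately gives $O\!\left((1 + \grow/\beta)\log_\grow N\right)$, matching the claimed query bound. The query bound is in expectation at each level; since a query consults every level once, linearity of expectation suffices and no high-probability union bound is required at this stage (the w.h.p. guarantee inherited from \Cref{lem:interpolation-search} already controls the per-bucket lookup cost).

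There is no real obstacle here; the work is purely a summation together with the change-of-base manipulation that absorbs the per-level logarithmic factors. The only subtlety worth stating carefully is that the $\log_\grow\grow = 1$ cancellation is precisely what allows the $\beta \log_N \grow$ term, after summing over levels, to appear without any residual $\log_\grow N$ factor in the final insertion bound. Once this is noted, the two bounds follow directly.
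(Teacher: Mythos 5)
Your proof is correct and takes the same approach as the paper: apply \Cref{lem:boa-cost-level} per level, multiply by the $O(\log_\grow N)$ levels, and collapse the products via change of base. The paper states this as an immediate consequence; you have simply made the change-of-base cancellations ($\log_\grow N \cdot \log_{M/B}\grow = \log_{M/B}N$ and $\log_\grow N \cdot \log_N\grow = 1$) explicit, which is exactly the implied calculation.
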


\begin{proof}
	Because a \boa{} has $\log_\lambda N$ levels, this follows immediately from
	\Cref{lem:boa-cost-level}.
\end{proof}

So for a fixed $\grow$, there is no advantage to choosing $\beta =
\omega(\grow)$. On the other hand, $\beta = o(\grow)$ is suboptimal, because
then choosing $\beta'=\grow'=\beta$ changes a linear factor in the query cost
to a logarithmic one. Therefore, it is optimal to choose $\beta =
\Theta(\grow)$, and in what follows we will fix $\beta = \grow$. Now
the main theorem follows immediately:

\boacost*

Thus, a BOA is optimal for large enough $\lambda$:

\begin{corollary}
	Let $B$ be a \boa{} with growth factor $\grow$ containing $N$ entries. If
	$\grow = \Omega\left(\log_\frac{M}{B}{N} +
	\frac{\log{N}}{\log\log{N}}\right),$ then $B$ is an optimal unsorted
	dictionary.
\end{corollary}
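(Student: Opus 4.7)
The plan is to show that under the stated lower bound on $\grow$, the insertion cost in \Cref{thm:boa-cost} collapses to $O(\grow/B)$, at which point the matching query cost $O(\log_\grow N)$ exactly saturates the Iacono--P\v{a}tra\c{s}cu lower bound (\Cref{thm:ip-lower-bound}). So the entire proof reduces to bounding the two non-$\grow$ summands in the insertion cost of \Cref{thm:boa-cost} by $O(\grow)$.

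First I would handle the $\log_{M/B} N$ term, which is immediate: the hypothesis $\grow = \Omega(\log_{M/B} N + \log N / \log\log N)$ directly gives $\log_{M/B} N = O(\grow)$. Next I would dispatch the $\log_\grow N$ term using the second part of the hypothesis. From $\grow = \Omega(\log N / \log\log N)$ it follows that $\log \grow = \Omega(\log\log N - \log\log\log N) = \Omega(\log\log N)$, and therefore
\[
\log_\grow N \;=\; \frac{\log N}{\log \grow} \;=\; O\!\left(\frac{\log N}{\log\log N}\right) \;=\; O(\grow).
\]
Plugging both bounds into \Cref{thm:boa-cost} shows that insertions cost $O(\grow/B)$ amortized IOs, while queries remain $O(\log_\grow N)$ w.h.p.

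Finally, I would invoke \Cref{thm:ip-lower-bound} with the tuning parameter set to $\grow$: any external-memory dictionary achieving $O(\grow/B)$ amortized insertion cost must pay $\Omega(\log_\grow N)$ expected IOs per query. Since the BOA's query upper bound matches this lower bound, the BOA is optimal on the insertion/query tradeoff curve in the external-memory unsorted dictionary model, which is what the corollary claims.

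There is no real obstacle here; the only thing to verify with a bit of care is the calculation that $\grow = \Omega(\log N / \log\log N)$ implies $\log_\grow N = O(\grow)$, since this is the threshold at which the $\log_\grow N$ term in the insertion cost ceases to dominate $\grow$. Everything else is a direct substitution of \Cref{thm:boa-cost} into \Cref{thm:ip-lower-bound}.
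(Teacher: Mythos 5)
Your proof is correct and supplies exactly the calculation the paper leaves implicit (the paper simply asserts the corollary follows from \Cref{thm:boa-cost} together with \Cref{thm:ip-lower-bound}). The threshold computation you flag---that $\grow = \Omega(\log N/\log\log N)$ forces $\log\grow = \Omega(\log\log N)$ and hence $\log_\grow N = O(\log N/\log\log N) = O(\grow)$---is precisely the point at which the $\log_\grow N$ summand in the insertion bound ceases to dominate, so your argument matches the intended one.
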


Note that the condition that $\grow=\Omega(\log_{M/B}{N})$ is related to the
permutation bound~\cite{DBLP:journals/cacm/AggarwalV88}. This is because BOAs
and their variations support some form of successor operation for the order of
the fingerprints (the hashed order of the keys). 

\section{Bundle of Trees Hashing}\label{sec:bot}

In order for a BOA to be an optimal dictionary, its growth factor $\grow{}$
must be $\Omega(\log{N}/\log\log{N})$. Otherwise, the cost of insertion is
dominated by the cost of merging, which in slow because it effectively sorts
the fingerprints using a $\lambda$-ary merge sort. In this section, we present
the \defn{Bundle of Trees Hash Table} (BOT), which is a BOA-like structure. A
BOT stores the fingerprints in a log in the order in which they arrive. Each
level of the BOT is like a level of a BOA, where the bundle of arrays on each
level is replaced by an search structure on the log (the \defn{routing tree})
and a data structure needed to merge routing trees (the \defn{character
queue}). The character queue performs a delayed sort on the characters needed
at each level, thus increasing the arity of the sort and decreasing the IOs. 

A BOT has $s = \lceil \log_\lambda N/B \rceil$ levels, each of which consists
of at most one routing tree where the root has degree less than $\lambda$ and
all internal nodes have degree $\lambda$. Each node of a routing tree contains
a routing filter, which functions similarly to the routing filters in
\Cref{sec:boa}. In a BOT, the routing filter takes as input a fingerprint and
outputs a set of pointers to the children which may contain it, though some of
these may be false positives. It also returns some auxiliary information
discussed below, which together with the child pointer is referred to as the
\defn{sketch} of the fingerprint.

Each leaf points to a block of $B\log_\lambda N$ fingerprints in the log; the
reason for using blocks of this size will be explained in
\Cref{sec:character-queue}. The deepest level $s$ indexes the beginning of
the fingerprint log with a tree of depth $s$, the next level then indexes the
next section and so forth.  Insertions are
appended to the log until they form a block, at which point they are added to
the tree in the 1st level of the BOT.

\IInsist{Querying a BOT}
A query to the BOT for a fingerprint $K$ is performed indepedently at each
level, beginning at the root of each routing tree.  At a routing node of height
$h$ in a tree, the routing filter is queried. Whereas the routing filter in a
BOA returns only the last array containing a fingerprint with a given prefix
$P_h(K)$, BOTs use a \defn{refined routing filter} that returns a full list of
all the children with such a fingerprint. The details of the refined routing
filter are left for \Cref{sec:bot-query}. The query is then passed to each of
these children, until it reaches a block of the log, which is then searched in
full.  In this way queries are ``routed'' down the tree on each level to the
part of log where the fingerprint and its associated value are. In addition, as
queries descend the routing tree, they may generate false positives which are
likewise routed down towards the log.

An issue that arises from this querying algorithm is that if a query generates a
false positive in a node of height $h$, that is, there is a fingerprint $K'$
with $P_h(K)=P_h(K')$, then in the child containing $K'$, the shorter prefixes
will also match, {\it i.e.} $P_{h-1}(K)=P_{h-1}(K')$. Therefore false positives
will always propagate down the routing tree and at each subsequent node may in
turn generate more false positives. To prevent this, the routing filter of each
node of the routing tree keeps, for each fingerprint $K$, an additional
\defn{check character} taken from the tail of K. Positive queries must also
match this character, and nodes of different heights use different parts of the
fingerprint for the check characters so that the probabilities of two
fingerprints matching on different levels are independent. Check characters are
explained further in \Cref{sec:bot-query}.

\IInsist{Inserting into a BOT}
When a level $i$ in the BOT fills, its routing tree is merged into the routing
tree of level $i+1$, thus increasing the degree of the target routing tree by 1 (and
perhaps filling level $i+1$, which triggers a merge of level $i+1$ into
$i+2$, and so on). The merge of level $i$ into level $i+1$ consists of adding
the prefix-sketch pairs of the fingerprints from level $i$ to the routing
filter of the root on level $i+1$. The child pointers of these pairs will point
to the root of the formerly level-$i$ routing tree, so it becomes a child of
the root of the level $i+1$ routing tree, although it isn't moved or copied.
In this way, a BOT resembles an LT-LSM,
described in \Cref{sec:lsm}.

In order to add a fingerprint $K$ from level $i$ to the root routing filter on
level $i+1$, the prefix $P_{i+1}(K)$ must be known. However, the root routing
filter on level $i$ only stores the prefix $P_i(K)$ for each fingerprint $K$ it
contains, so that in particular the last character of $P_{i+1}(K)$ is missing.
As described in \Cref{sec:character-queue}, each level has a character queue,
which provides this character, as well as the check characters, in order to
merge the routing trees efficiently.

By replacing the arrays of a BOA on each level by character queues, the BOT can
insert efficiently for a larger range of $\grow{}$. Because the arrays are no
longer available to answer queries, BOTs instead use the recursively
constructed routing trees, which require some refinement over the routing
filters in BOAs.  With these in place, however, query performance is optimal,
and the BOT becomes an optimal dictionary for a wider range of the parameter
$\grow{}$, matching the range of the IP hash
table~\cite{DBLP:conf/soda/IaconoP12}.

\subsection{BOT Queries}\label{sec:bot-query}

This section covers the routing tree structure in more detail. First, we cover
the specifics of check characters, and then we introduce the refined routing
filter and prove its performance characteristics.

As described above, in a BOT each false positive in a node of a routing tree
queries an additional child.  Because the routing filter in the child has
shorter prefixes, that false positive will cause an entire path down the tree
to be accessed. Moreover, along the way more false positives can be generated.
Check characters are used to reduce the probability of false positives
in BOTs and short-circuit the paths that do occur.

The $i$th check character $C_i$ of a fingerprint $K$ is the $i$th character
from the end of the string representation of $K$. As described in
\Cref{sec:hashing}, we assume that the fingerprints are taken from a universe
of size at least $N^2$ so that the check characters do not overlap with the
characters used in the prefixes of the routing filters.

The routing filter of a node of height $h$ in a routing tree stores $C_h(K)$ in
the sketch for each fingerprint $K$ in the filter, and when queried, returns a
list of the sketches of prefix-matching fingerprints in the order that they
appear in the log.

Now, the query only proceeds on those children whose check characters match the
check character $C_h(K)$. Since the characters of the fingerprint are uniformly
distributed, the check character of each false positive matches with
probability $1/\grow$. Moreover, the characters of each level are
non-overlapping, so for fingerprints $K$, $K'$ the event that $V_h(K)=V_h(K')$
is independent of the event that $V_{h-1}(K)=V_{h-1}(K')$.  Therefore a false
positive on a node of height $h$ may still be eliminated in its child (of
height $i-1$), short-circuiting the paths that false positives would otherwise
create.

While not strictly necessary, in order to simplify the analysis, we will
further arrange it so that false positives may only be created in the root of
the tree, and that at each level, only a $1/\lambda$ fraction of the false
positives survive in expectation.  To prevent new false positives from being
generated when a query passes from a parent to a child, we also keep the
\defn{next character} of each fingerprint in its prefix-sketch pair stored in
the routing filter. For a fingerprint $K$ in a node of height $h$, the next
character is just the next character that follows the prefix, $P_h(K)$, so that
its prefix in the parent, $P_{h+1}(K)$, can be obtained. A false positive in
the children which is not in the parent will not match this next character and
can be eliminated.

When there are multiple prefix-matching fingerprints in both a parent and its
child, we would like to be able to align the lists returned by the routing
filters so that known false positives in the parent (either from check or next
characters) can be eliminated in the child. Otherwise the check character in
the child of a known false positive in the parent may match the queried
fingerprint, and therefore more than $1/\lambda$ of the false positives may
survive in expectation.  To this end, we require the routing filter to return
the list of sketches of prefix-matching fingerprints in the order they appear
in the log.  Then after the sketches in the child list whose next characters do
not match the parent are elimated, the remaining phrases will be in the same
order as in the parent. In this way, known false positives can also be
eliminated in the child.

Now because of the next characters, false positives may only be created in the
root of the routing tree. Each false positive in the root corresponds to a
fingerprint $K'$ in the level. At each node on the path to $K'$'s location in
the log, we use the ordering to determine which returned sketch corresponds to
$K'$, so that the false positive corresponding to $K'$ is eliminated with
probability $1/\lambda$. Thus the query path for $K'$ causes at most
$\sum_{i=1}^h 1/\lambda^i=O\left(\frac{1}{\lambda}\right)$ node accesses. This
is independant of the number of false positives in the root. Since there are
$O(1)$ expected false positives in the root, we have shown:

\begin{lemma}\label[lemma]{lem:routing-tree-false-positive}
	During a query to a routing tree, the expected number of nodes accessed due
	to false positives is at most $O\left(\frac{1}{\lambda}\right)$.
\end{lemma}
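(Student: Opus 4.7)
The plan is to decompose the expected number of nodes visited due to false positives into two factors: the expected number of false positives generated at the root of the routing tree, and, for each such root false positive, the expected length of the downward path it causes. First I would verify the invariant emphasized in the prose above: because each prefix--sketch pair stores the \emph{next character} of its fingerprint, a child node can never introduce a brand-new false positive. A sketch in a child of height $h-1$ whose prefix matches $P_{h-1}(K)$ but whose parent entry was not itself a match must disagree with $K$ on the next character and is discarded. Hence every accessed node beyond the root is charged to some prefix-colliding fingerprint $K'$ already present at the root.

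Next I would track the path of a fixed root-collider $K'$. The routing filter returns its list of matching sketches in log order, which lets the query align the child list with the parent list; after removing entries whose next characters fail to match $P_{h}(K)$, the remaining sketches sit in the same relative positions as in the parent, so the descendant of $K'$ at each lower node can be identified unambiguously. At each step down from a node of height $h$ to its child of height $h-1$, survival of the false positive requires the check character $C_{h-1}$ of $K'$ to equal $C_{h-1}(K)$. Because check characters at different heights are disjoint tails of the fingerprint and the hash is $\Theta(\log N)$-wise independent, these events are independent across levels and each has probability $1/\lambda$. Summing the geometric series $\sum_{i\ge 1}\lambda^{-i}=O(1/\lambda)$ bounds the expected number of nodes visited per root-collider.

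The last ingredient is the expected number of false positives at the root, which I would show is $O(1)$. The root routing filter of a tree on level $\ell$ stores prefixes long enough that the expected number of fingerprints whose prefix agrees with $P_h(K)$ is $O(1)$; this is the same pairwise-independence counting used in the query analysis of \Cref{lem:boa-cost-level}, applied with the root's prefix length and capacity in place of $h_\ell$ and $B\lambda^\ell$. Multiplying the two bounds yields the claimed $O(1)\cdot O(1/\lambda)=O(1/\lambda)$ expected nodes accessed due to false positives.

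The main obstacle is not the arithmetic but the bookkeeping that legitimizes the multiplication. One has to argue both that distinct deeper false positives can be charged injectively to distinct root colliders (so there is no double counting of paths) and that the per-step survival probabilities remain $1/\lambda$ \emph{after} conditioning on the path being reached. The alignment of returned sketches by log order, together with independence of check characters across heights, is what makes both claims go through; formalizing this correspondence is the delicate step, while the geometric-sum and expected-count estimates are routine once the invariant is in hand.
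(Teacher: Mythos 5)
Your proposal is essentially the paper's own argument: decompose into (expected root false positives) $\times$ (expected path length per root false positive), use the next characters to argue false positives originate only at the root, use the log-order alignment to track each root collider $K'$ down its path, and bound the path length by the geometric series $\sum_{i\ge1}\lambda^{-i}=O(1/\lambda)$ arising from independent $1/\lambda$-probability check-character matches. Your discussion of the bookkeeping needed to justify the multiplication (injective charging of deeper false positives to root colliders, independence of check-character survival across heights) is precisely the informal argument the paper gives in the paragraphs preceding the lemma, so the two proofs coincide in both structure and substance.
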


\IInsist{Refined Routing Filter.}
A BOA routing filter handles prefix collisions by returning only the last run
containing the queried fingerprint and then chaining in the runs. Because there
are no longer arrays with which to chain, the BOT routing filter, however, must
handle prefix collisions itself and return a complete ordered list of sketches
for all prefix-matching fingerprints, while having the same performance as in
\Cref{sec:boa-routing-filter}. 

The idea behind the refined routing filter is to keep the prefix-sketch pairs
in a list, and use a hash table on prefixes to point queries to the appropriate
place. Each pointer may require as many as $\Omega(\log N)$ bits, and we
require the routing filter to have $O(1)$ characters per fingerprint, so we
require the hash table to use shorter prefixes so as to reduce the number of
buckets and thus reduce its footprint. In particular, it uses prefixes which are
$\log_\lambda\log_\lambda N$ characters shorter, which we refer to as \defn{pivot
prefixes}.

The list \defn{delta encodes} the prefix $P_h(K)$ for each fingerprint $K$, together with the
sketch, $S_h(K)$. This means the difference between $P_h(K)$ and the preceding
prefix is stored, together with the sketch. In addition, the first entry
following each pivot prefix contains the full prefix $P_h(K)$, rather then just
the difference. Otherwise, when the hash table routes a query to that place in
the list, the full prefix wouldn't be computable.

We first analyze the space efficiency of a delta encoded list of a collection of
prefixes and then analyze the performance characteristics of refined routing
filters.

\begin{lemma}\label[lemma]{lem:delta-encoding}
	A list of delta-encoded prefixes with density $D$, that is there are $D$
	prefixes in the list for every possible prefix, requires
	$O(-\log_\lambda{D})$ characters per prefix.
\end{lemma}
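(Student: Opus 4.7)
The plan is to sort the prefixes, encode the consecutive gaps using a self-delimiting variable-length integer code, and bound the total length by an application of Jensen's inequality.

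First I would set up notation. Let $L$ be the number of prefixes in the list and $P$ the size of the space of possible prefixes, so that by hypothesis $L/P = D$. Since delta encoding presupposes a sorted list, write the prefixes as $p_1 \leq p_2 \leq \cdots \leq p_L$ and define the consecutive gaps $d_i = p_i - p_{i-1}$ with $p_0 := 0$. The gaps telescope to $\sum_{i=1}^L d_i = p_L \leq P = L/D$, which is the key constraint the encoding will exploit. Duplicate prefixes ($d_i = 0$) are handled by encoding $d_i + 1$ instead.

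Next I would encode each gap with a self-delimiting prefix-free code, such as an Elias gamma code, which represents a positive integer $d$ in $O(1 + \log d)$ bits and therefore in $O(1 + \log_\lambda d)$ characters when a character is $\log \lambda$ bits wide. The full list entry for each fingerprint then consists of this variable-length delta together with the fixed-size sketch $S_h(K)$; the lemma statement concerns only the per-prefix cost of the encoded gap.

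Finally I would convert the per-entry bound into an average by concavity of the logarithm:
\[
\sum_{i=1}^L \log_\lambda d_i \;\leq\; L \log_\lambda\!\left( \frac{1}{L}\sum_{i=1}^L d_i \right) \;\leq\; L \log_\lambda \frac{1}{D},
\]
so the total encoding length is $O\bigl(L(1 + \log_\lambda(1/D))\bigr)$, yielding $O(-\log_\lambda D)$ characters per prefix in the sparse regime where the bound is meaningful. The main subtlety I expect is confirming that the constant per-delta framing overhead from the self-delimiting code is absorbed by the bound: when $D$ is close to $1$, $-\log_\lambda D$ is itself $O(1)$, so the $O(1)$ overhead is harmless, while for $D \ll 1$ the logarithmic term dominates. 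The rest is a routine telescoping-plus-Jensen calculation.
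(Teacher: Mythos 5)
Your proposal is correct and follows essentially the same route as the paper's proof, which likewise observes that the average gap is $1/D$ and invokes concavity of the logarithm (the paper says ``convex,'' a minor slip) to bound the average encoding length. You have simply made explicit the telescoping-sum bound, the self-delimiting variable-length code, and the Jensen step that the paper compresses into two sentences.
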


\begin{proof}
	The average difference between consecutive prefixes is $1/D$. Because
	logarithms are convex, the average number of characters required to
	represent this difference is therefore $O(-\log_\lambda{D})$.
\end{proof}

Now we can prove:

\begin{lemma}\label[lemma]{lem:bot-routing-filter-update}
	A refined routing filter can be updated using
	$O\left(\frac{\grow\log\grow{}}{B\log N}\right)$ IOs per new entry.
\end{lemma}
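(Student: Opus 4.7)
The idea is to realize the update as a single sequential merge. When level $i$ is merged into level $i+1$, at most $\lambda - 1$ batches of $n_i$ prefix-sketch pairs each are already recorded in the root routing filter of level $i+1$, and one new batch of $n_i$ pairs arrives from below. I would first argue that each prefix-sketch pair in the refined filter occupies $O(1)$ characters, i.e.\ $O(\log \lambda)$ bits. This combines \Cref{lem:delta-encoding} with a density bound: by the $\Theta(\log N)$-wise independence of the hash function and \Cref{lem:chernoff}, the fingerprint prefixes stored at this node are near-uniformly distributed over their range, so the effective density in the delta-encoded list is bounded and each successive difference is representable in $O(1)$ characters. The sketch itself (child pointer, check character, next character) is likewise $O(1)$ characters by construction.

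Next, I would describe the merge procedure. The incoming $n_i$ fingerprints arrive carrying length-$i$ prefixes; the character queue of \Cref{sec:character-queue} streams out the additional character needed to extend each to a length-$(i+1)$ prefix, along with the check and next characters, in prefix order. We then scan the existing delta-encoded list together with the stream of new entries and produce a fresh merged delta-encoded list on disk in prefix order. In the same pass we rebuild the pivot-prefix hash table: whenever the scan crosses a pivot boundary, we record the current output offset in the corresponding bucket and emit the full (undelta'd) prefix for that entry. Because the pivot hash table has at most $O(\lambda n_i / \log_\lambda N)$ constant-size buckets, it is strictly smaller than the delta-encoded list and does not dominate the IO cost.

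Finally, I would convert sizes into IOs. The merged filter contains at most $\lambda n_i$ entries of $O(\log \lambda)$ bits each, totaling $O(\lambda n_i \log \lambda /\log N)$ machine words and hence $O(\lambda n_i \log \lambda / (B \log N))$ blocks. A sequential read of the old filter and incoming stream plus a sequential write of the new filter and hash table costs $O(\lambda n_i \log \lambda / (B \log N))$ IOs in total, which is $O(\lambda \log \lambda / (B \log N))$ IOs per newly inserted entry, as claimed. The main obstacle I expect is the density argument in the first step: a priori many fingerprints could share a prefix, so one must carefully invoke the hash function's independence to confirm that \Cref{lem:delta-encoding} applies in its intended regime and yields the $O(1)$-characters-per-entry bound; everything else is routine bookkeeping on the streaming merge.
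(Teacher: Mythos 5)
Your proposal follows the same route as the paper: bound each delta-encoded prefix-sketch entry at $O(1)$ characters (= $O(\log\lambda)$ bits) via \Cref{lem:delta-encoding}, observe the pivot hash table is of the same order, and then charge a sequential read/rewrite of the $O(C\log\lambda)$-bit filter against the $C/\lambda$ new fingerprints to get $O\left(\frac{\lambda\log\lambda}{B\log N}\right)$ IOs per new entry. Your careful conversion from bits to words to blocks is in fact cleaner than the paper's own write-up, which elides the $\log N$ bits-per-word factor in its final line even though the lemma statement includes it.

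One small correction to what you flag as ``the main obstacle'': the $O(1)$-characters-per-entry bound does not require any appeal to $\Theta(\log N)$-wise independence or to \Cref{lem:chernoff}. \Cref{lem:delta-encoding} is a deterministic, worst-case statement -- its proof uses only convexity of the logarithm, so it holds regardless of how the prefixes are distributed, as long as the density $D$ (number of stored prefixes divided by the size of the prefix range) is $\Omega(1/\lambda^{O(1)})$. That density is a fixed consequence of how $h_\ell$ is chosen relative to the level's capacity, not a probabilistic event. The concentration arguments via Chernoff are needed elsewhere in the paper (query cost), but not here. Also, your parenthetical that the pivot table is ``strictly smaller'' than the delta-encoded list isn't quite right -- both are $\Theta(C\log\lambda)$ bits -- but since you only use it to say the pivot table does not dominate, the conclusion stands.
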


\begin{proof}
	Let $C$ be the capacity of the level. There are $\frac{C}{\log_\lambda N}$
	pivot prefixes. For each pivot prefix, the hash table stores the bit
	position in a list with at most $C$ entries, where $C \leq N$.  Each entry
	is at most $\log N$ bits, so this position can be written using $O(\log N)$
	bits.

	For each fingerprint in the node, the list contains $O(1)$ characters by
	\Cref{lem:delta-encoding}, or $O(\log \lambda)$ bits. Additionally, each
	pivot prefix has to an initial entry of length $O(\log N)$ bits, so the
	list all together uses $O(C\log\lambda + \frac{C}{\log_\lambda N} \cdot
	\log N)= O(C\log\grow{})$ bits.

	When the refined routing filter is updated, the old version is read
	sequentially and the new version is written out sequentially. $C/\lambda$
	fingerprints are added at a time, so this incurs
	$O\left(\frac{\grow\log\grow{}}{B}\right)$ IOs per entry.
\end{proof}

The refined routing filter also still performs constant IO lookups:

\begin{lemma}\label[lemma]{lem:bot-routing-filter-lookup}
	A refined routing filter performs lookups in $O(1)$ IOs in expectation.
\end{lemma}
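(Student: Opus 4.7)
My plan is to split the lookup into two phases---a hash probe on the pivot prefix, followed by a bounded scan of the delta-encoded list---and bound each in expectation.

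First, I would compute the pivot prefix of the queried fingerprint $K$ by truncating $P_h(K)$ by its last $\log_\lambda \log_\lambda N$ characters, and look it up in the pivot hash table. By the construction in lemma lem:bot-routing-filter-update, each bucket stores a single bit-position of $O(\log N)$ bits, so this probe costs $O(1)$ IOs.

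Next, the probe returns a starting position in the delta-encoded list, and I scan forward through all entries sharing that pivot prefix (halting as soon as the pivot changes). To bound this, I would argue that (i) the first entry in the range carries the $O(\log N)$-bit anchor full prefix required to reconstruct subsequent deltas, and (ii) each subsequent entry costs $O(1)$ characters, i.e., $O(\log \lambda)$ bits, by lemma lem:delta-encoding. The expected number of entries in the bucket is $O(\log_\lambda N)$: the routing filter is dimensioned so that $C$ entries are spread over $\Theta(C/\log_\lambda N)$ pivot buckets, and $\Theta(\log N)$-wise independence of the hash function from section sec:hashing ensures the expectation bound applies uniformly to any queried pivot. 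Hence the total number of bits scanned is at most $O(\log N) + O(\log_\lambda N \cdot \log \lambda) = O(\log N)$ in expectation. Invoking the standing assumption $B = \Omega(\log N)$ from lemma lem:interpolation-search, these $O(\log N)$ bits fit in $O(1)$ blocks, so the expected scan cost is $O(1)$ IOs. Combining the probe and the scan yields the claim.

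The main obstacle is the bit-accounting: because each bucket is entered via a single $\Omega(\log N)$-bit anchor prefix before any delta can be interpreted, the budget cannot be driven below $\Theta(\log N)$ bits. The pivot-sizing choice of $\log_\lambda \log_\lambda N$ shorter characters is exactly what ensures the expected bucket length times the per-entry delta cost, $O(\log_\lambda N \cdot \log \lambda) = O(\log N)$, is dominated by that same anchor budget, so the two contributions together remain within a single block's worth of bits. A secondary point to address is that we only need expectation, so no Chernoff-style concentration of bucket sizes is required---linearity over pairwise-independent indicators suffices.
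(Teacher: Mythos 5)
Your proof matches the paper's argument in all essential respects: a constant-IO probe of the pivot hash table, followed by a bounded expected scan of the delta-encoded list, with the expected bucket size $O(\log_\lambda N)$ coming from pairwise independence and the per-entry cost $O(\log\lambda)$ bits from \Cref{lem:delta-encoding}. The only cosmetic difference is that the paper also fetches the successor pivot's bit position to bound the scan, whereas you halt when the pivot changes, and you make the $O(\log N)$-bit anchor cost and the $B=\Omega(\log N)$ assumption explicit where the paper compresses this into ``$O(1)$ words.''
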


\begin{proof}
	The pivot bit string of a fingerprint and its successor are accessed from
	the hash table in $O(1)$ IOs. This return beginning and ending bit
	positions in the list. Because the fingerprints are distributed uniformly
	and are pairwise independent there
	are $O(\log_\lambda N)$ fingerprints matching the pivot prefix in
	expectation.  The list has $O(\log \lambda)$ bits per fingerprint, so
	$O(1)$ words are fetched from the list in expectation, and hence $O(1)$
	IOs.
\end{proof}

\subsection{Character Queue}\label{sec:character-queue}

The purpose of the character queue is to store all the sketches of fingerprints
contained in a level $i$ that will be needed during a merge in the future.
When level $i$ is merged into level $i+1$, the character queue outputs a sorted
list of the delta-encoded prefix-sketch pairs of all the fingerprints, which is
used to update the root routing tree. The character queue is then merged into
the character queue on level $i+1$.

The character queue effectively performs a merge sort on the
sketches. If it were to merge all the sketches as soon as they are
available, this would consist of $\lambda$-ary merges.  In order to
increase the arity of the merges, it defers merging sketches which are
not needed immediately. The sketches are collection of \defn{series},
by which we mean a collection of sorted runs. Each series stores a
continuous range of sketches $S_i(K),S_{i+1}(K),\ldots,S_{i+j}(K)$ for
each fingerprint $K$, together with the prefix up to the first sketch,
$P_{i-1}(K)$. These prefixes are delta encoded in their run. Thus the
size of an entry is determined by the number of sketches in the range
and the length of the prefix relative to the size of the run (by
\Cref{lem:delta-encoding}).

\IInsist{The character queue tradeoff}
We are faced with the following tradeoff. If the character queue merges a series
frequently, the delta encoding is more efficent, which decreases the cost of
the merging. However the arity is lower, which increases it. The character
queue uses a merging schedule which balances this tradeoff and thus
achieves optimal insertions.

\IInsist{The character queue merging schedule} The character
queue on level $i$ contains the sketches
$S_{i+1}(K), S_{i+2}(K),\ldots S_s(K)$ of each fingerprint $K$ in the
level. These characters are stored in a collection of series
$\{\sigma_{j_q}\}$, where $j_q$ is the smallest multiple of $2^q$
greater than $i$. Series $\sigma_{j_q}$ contains the sketches
$S_{j_q}(K),\ldots,S_{j_{q+1}-1}(K)$. Each series consists of a
collection of sorted runs each of which stores the delta encoded
prefix of each fingerprint together with its sketches.

When level $i$ fills, the runs in the series $\sigma_{i+1}$ are merged, and the
character queue outputs the delta encoded prefix-sketch pairs,
$(P_{i+1}(K),S_{i+1}(K))$ to update the root routing filter on level $i+1$. If
$2^{\rho(i+q)}$ is the greatest power of 2 dividing $i+1$ ($\rho$ is sometimes
referred to as the \defn{ruler function}~\cite{wiki:Thomae's_function}), then
$\sigma_{i+1}$ also contains the next $2^{\rho(i+1)}-1$ sketches of each
fingerprint. These are batched and delta encoded to become runs in the series
$\sigma_{j_q}$ for $q = [0,\rho(i+1)]$.

This leads to the following merging pattern: $\sigma_j$ batches $2^{\rho(j)}$
sketches, and has delta encoded prefixes of $2^\rho(j)$ characters on average,
by \Cref{lem:delta-encoding}.  Therefore,

\begin{lemma}\label[lemma]{lem:character-queue-size}
	A series $\sigma_j$ in a character queue contains $O(2^{\rho(j)})$
	characters per fingerprint.
\end{lemma}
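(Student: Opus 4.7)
The plan is to split the per-fingerprint storage in $\sigma_j$ into two pieces---the batched sketches and the delta-encoded prefix that accompanies them---and bound each separately before adding.

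For the sketches I would read the count directly off the merging schedule. Writing $j = j_q$, series $\sigma_{j_q}$ stores $S_{j_q}(K),\ldots,S_{j_{q+1}-1}(K)$ for each fingerprint $K$, so it holds $j_{q+1} - j_q$ sketches per fingerprint. Since $j_q$ is the smallest multiple of $2^q$ strictly above the host level index $i$ and $j_{q+1}$ is the corresponding multiple of $2^{q+1}$, an elementary calculation gives $j_{q+1}-j_q \leq 2^q \leq 2^{\rho(j)}$ (the case $\rho(j_q) > q$, where $j_q$ is already a multiple of $2^{q+1}$, makes the series empty and the bound trivial). Each sketch takes $O(1)$ characters, so the sketches contribute $O(2^{\rho(j)})$ characters per fingerprint.

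For the prefix I would invoke \Cref{lem:delta-encoding}. Any run sitting in $\sigma_j$ was produced when some level $i < j$ filled, so it contains $\Theta(\lambda^{i})$ fingerprints, each carrying a prefix $P_{j-1}(K)$ drawn from $\lambda^{j-1}$ possible values. The prefix density is therefore $D = \Theta(\lambda^{-(j-i-1)})$, and by the same smallest-multiple-of-$2^q$ computation used above, the merging schedule forces $j - i \leq 2^{\rho(j)}$, which yields $-\log_\lambda D = O(2^{\rho(j)})$. Plugging this density into \Cref{lem:delta-encoding} gives $O(2^{\rho(j)})$ characters of delta-encoded prefix per fingerprint. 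Adding this to the sketch contribution yields the claimed $O(2^{\rho(j)})$ bound.

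The main obstacle is the combinatorial bookkeeping around $j$, $j_q$, and the ruler function: one must verify both that $j_{q+1} - j_q \leq 2^{\rho(j)}$ for every nonempty series (including the degenerate cases where $\rho(j_q)$ exceeds $q$) and that the level-of-origin $i$ of any run ever inserted into $\sigma_j$ really satisfies $j - i \leq 2^{\rho(j)}$ under the ruler-function merging pattern described in \Cref{sec:character-queue}. Once those two inequalities are established, the use of \Cref{lem:delta-encoding} is a one-line application and the rest of the argument is arithmetic.
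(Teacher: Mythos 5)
Your decomposition---bound the number of batched sketches, then bound the delta-encoded prefix via \Cref{lem:delta-encoding}, then add---is exactly the argument the paper intends; the paper compresses it into the single sentence preceding the lemma (``$\sigma_j$ batches $2^{\rho(j)}$ sketches, and has delta encoded prefixes of $2^{\rho(j)}$ characters on average''), so your write-up is in fact a useful expansion rather than a different route.

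One bookkeeping quibble, since you flagged it yourself as the place a gap could hide: the inequality you want is not $j-i\le 2^{\rho(j)}$ but $j-i\le 2^{\rho(j)}+1$. Concretely, take $i=7$ filling; the sketches $S_9,\dots,S_{15}$ are distributed on level $8$ with $j_0=9$, $j_1=10$, $j_2=12$, $j_3=16$, so $\sigma_{12}$ receives a run with level-of-origin $i=7$, giving $j-i=5>4=2^{\rho(12)}$. Your density calculation uses $-\log_\lambda D = j-i-1$, and that quantity is indeed bounded by $2^{\rho(j)}$, so the off-by-one is harmless and the big-$O$ conclusion stands; you should just not assert the tighter $j-i\le 2^{\rho(j)}$. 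Similarly, the run size is really $\Theta(\lambda^{i-1}B\log_\lambda N)$ rather than $\Theta(\lambda^i)$, but this only increases the density $D$ and hence tightens the bound, so it too is harmless. With those two corrections in place the argument is complete and matches the paper's approach.
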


This leads to a merging schedule where the characters per item merged on the
$j$th level is $O(2^{\rho(j)})$. Starting from 1 this is
$1,2,1,4,1,2,1,8,1,2,1,4,1,2,1,16,\ldots$, which resemble the tick marks of a
ruler, hence the name ruler function.

We now analyze the cost of maintaining the character queues.

\begin{lemma}\label[lemma]{lem:character-queue-update}
	The total per-insertion cost to update the character queues in a BOT is
	$\Theta\left(\frac{1}{B}\left(\log_{\frac{M}{B}} N + \log\log
	M\right)\right)$.
\end{lemma}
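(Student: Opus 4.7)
The plan is to bound the work per character queue merge, amortize over insertions between fills, and then collapse the resulting sum over all BOT levels using the ruler function structure. First I would analyze a single merge. When level $i$ fills, series $\sigma_{i+1}$ on level $i$ is merged; by \Cref{lem:character-queue-size} it holds $\Theta(2^{\rho(i+1)})$ characters per fingerprint, hence $\Theta(\lambda^i B \cdot 2^{\rho(i+1)})$ characters in total, organized into the $\Theta(\lambda)$ runs contributed by the $\lambda$ preceding fills of level $i-1$. An external-memory $\lambda$-way merge processes these at a per-character cost of $\Theta((1 + \log_{M/B}\lambda)/B)$ IOs, so a single merge at level $i$ costs $\Theta(\lambda^i \cdot 2^{\rho(i+1)} \cdot (1 + \log_{M/B}\lambda))$ IOs. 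The distribution step that redeposits the $2^{\rho(i+1)}-1$ overflow sketches into the appropriate $\sigma_{j_q}$ of level $i+1$ is sequential and absorbed into the merge cost.

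Next I would amortize. Since level $i$ fills once every $\Theta(\lambda^i B)$ insertions, the per-insertion cost from level $i$ is $\Theta\!\bigl(2^{\rho(i+1)}(1+\log_{M/B}\lambda)/B\bigr)$. Summing over all $s=\lceil \log_\lambda(N/B)\rceil$ levels gives a total of
\[
\Theta\!\left(\frac{1+\log_{M/B}\lambda}{B}\right) \sum_{i=1}^{s} 2^{\rho(i+1)}.
\]
Here I would split the analysis by whether the merge runs fit in memory. For the $\Theta(\log_\lambda(M/B))$ initial levels with $\lambda^i B \le M$, all merges fit in cache and the per-character factor collapses to $\Theta(1/B)$; the ruler weights over this initial interval telescope to $\Theta(\log\log M)$. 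For the remaining levels the per-character cost is $\Theta(\log_{M/B}\lambda/B)$, and the identity $\log_\lambda N \cdot \log_{M/B}\lambda = \log_{M/B} N$ converts $s \cdot \log_{M/B}\lambda$ into $\log_{M/B} N$.

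The main obstacle will be the sum over the ruler weights: the naive bound $\sum_{j\le n} 2^{\rho(j)} = \Theta(n\log n)$ would introduce a spurious $\log\log N$ factor, so the proof must partition the sum at the memory-boundary level $i^* = \log_\lambda(M/B)$ and show that (i) above $i^*$ the $2^{\rho(i+1)}$ weights combine with $\log_{M/B}\lambda$ to yield exactly $\log_{M/B} N$ via the telescoping $\sum s / 2 = \Theta(s\log s)$ cancelled by the $s\cdot\log_{M/B}\lambda$ conversion, and (ii) below $i^*$ the in-memory merges are free per character, so only the ruler count itself survives, telescoping to $\Theta(\log\log M)$. The matching $\Omega$ bound follows symmetrically: even discarding every level's work except the one merge of $\sigma_{i+1}$, the $\lambda$-way external merge cannot avoid the $\log_{M/B}\lambda$ factor when out of memory, and the ruler-scheduled accumulation of sketches forces at least $\Omega(\log\log M)$ characters to be moved through the in-memory levels per insertion.
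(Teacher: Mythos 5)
Your amortization framework (cost per merge, divide by insertions per fill, sum over levels grouped by $\rho$) matches the paper's, but you have a fundamental misidentification of the merge arity that the entire character-queue design hinges on, and it breaks the calculation.

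You assume that when level $i$ fills, the series $\sigma_{i+1}$ consists of $\Theta(\lambda)$ runs and is merged at per-word cost $\Theta\bigl((1+\log_{M/B}\lambda)/B\bigr)$. This is exactly what the character queue is engineered to \emph{avoid}. The purpose of the ruler-scheduled deferral is to let $\sigma_j$ accumulate runs over many underlying fills so that by the time level $j-1$ fills and $\sigma_j$ is merged, it contains $\lambda^{2^{\rho(j)}}$ runs, and the merge cost per character is $\lceil\log_{M/B}(\lambda^{2^{\rho(j)}})\rceil = \lceil 2^{\rho(j)}\log_{M/B}\lambda\rceil$ rather than $\lceil\log_{M/B}\lambda\rceil$. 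With the correct arity, the per-level contribution is $2^{\rho(j)}\lceil 2^{\rho(j)}\log_{M/B}\lambda\rceil$ characters per fingerprint. Grouping levels by $k=\rho(j)$ (roughly $\log_\lambda N/2^k$ of them) the two $2^k$ factors cancel with the $1/2^k$ count, leaving $\log_\lambda N\cdot\sum_{k}\lceil 2^k\log_{M/B}\lambda\rceil$; the ceiling is $1$ for $k$ up to about $\log\log M$, and for larger $k$ the geometric series is dominated by its top term $\log_\lambda N\cdot\log_{M/B}\lambda = \log_{M/B}N$. That is the mechanism that produces $\log\log M + \log_{M/B}N$.

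Your version does not reach the stated bound by a different route; it simply gives the wrong answer. With $\lambda$-ary merges, your sum is $\Theta\bigl((1+\log_{M/B}\lambda)\sum_j 2^{\rho(j)}\bigr)$, and $\sum_{j\le n}2^{\rho(j)}=\Theta(n\log n)$, so after converting characters to words you get $\Theta\bigl(\log\log_\lambda N\cdot\max(1,\log_{M/B}\lambda)/B\bigr)$, which is smaller than the claimed $\Theta$ by a factor of roughly $\log_\lambda N/\log\log_\lambda N$ in the $\log_{M/B}N$ term. The patch you propose---splitting at the in-memory boundary $i^*=\log_\lambda(M/B)$ and asserting the ruler weights over $[1,i^*]$ ``telescope to $\Theta(\log\log M)$''---does not hold: that prefix sum is $\Theta(i^*\log i^*)$, not $\Theta(\log i^*)$, and there is no telescoping identity that collapses it. The missing idea is precisely the sentence in \Cref{sec:character-queue} that says deferral is used ``in order to increase the arity of the merges''; without pricing the merge at arity $\lambda^{2^{\rho(j)}}$ the analysis cannot produce $\log_{M/B}N$.
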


\begin{proof}
	When $\sigma_j$ is merged, $\lambda^{2^\rho(j)}$ runs are merged, which has
	a cost of
	$O\left(\frac{2^{\rho(j)}}{B}\left\lceil\log_{M/B}\left(\lambda^{2^\rho(j)}\right)\right\rceil\right)$
	characters per fingerprint.

	There are $\log_\lambda \frac{N}{B} = O(\log_\lambda N)$ levels, so this leads to the
	following total cost in terms of characters:

	\begin{align*}
		O\left(\sum_{i=1}^{\log_\lambda N} 2^{\rho(j)}\left\lceil\log_{\frac{M}{B}}\left(\lambda^{2^\rho(j)}\right)\right\rceil\right)
		&= O\left(\sum_{k=0}^{\log \log_\lambda N} \frac{\log_\lambda N}{2^k} \cdot 2^k \left\lceil\log_{\frac{M}{B}}\left(\lambda^{2^k}\right)\right\rceil\right) \\
		&= O\left(\log_\lambda N \left(\log\log M + \sum_{k=\log\log M}^{\log \log_\lambda N} 2^k\log_{\frac{M}{B}}\lambda\right)\right) \\
		&= O\left(\log_\lambda N\left(\log\log M + \log_{\frac{M}{B}}N\right)\right),
	\end{align*}
	where the last equality is because the RHS sum is dominated by its last
	term. Because there are $\log_\lambda N$ characters in a word, and all
	reads and writes are performed sequentially in runs of size at lease $B$,
	the result follows.
\end{proof}

The character queue is where we require that the blocks of the log have size
$B\log_\lambda N$, because we want the runs created when the block is added to
the first level to be at least size $B$. 

\subsection{Performance of the BOT}

We can now prove Theorem 3:

\botcost*

\begin{proof}
	By \Cref{lem:bot-routing-filter-update}, the cost of updating the routing
	filters is $O\left(\frac{\grow}{B}\right)$, since there are $O(\log_\grow
	N)$ levels. This together with the cost of updating the character queues,
	given by \Cref{lem:character-queue-update}, is the insertion cost.

	By \Cref{lem:routing-tree-false-positive}, a query for fingerprint $K$
	accesses an average of $O\left(\frac{\log_\lambda N}{\lambda}\right)$ nodes
	across the routing trees on all level due to false positives. Using
	\Cref{lem:chernoff} with $\delta=\lambda$, we have that this is
	$O(\log_\lambda N)$ nodes w.h.p.
		
	If $K$ is contained in the bot, then $O(\log_\lambda N)$ nodes are accessed
	on its root-to-leaf path.

	A block of the log is scanned at most once for a true positive and also
	whenever a false positive from the level $i$ root survives $i$ times. The
	expected number of such false positives for level $i$ is $1/\grow^i$, so
	the expected number across levels is $O\left(\frac{1}{\lambda}\right)$
	Therefore by \Cref{lem:chernoff}, the probability that $\omega(1)$ blocks
	are scanned due to false positives is $O\left(\frac{1}{\lambda}\right)$.
	Each block can be scanned in $O(\log_\lambda N)$ IOs, so this yields the
	result.
\end{proof}

It follows that:

\begin{corollary}\label[corollary]{cor:bot}
	Let $\mathcal{B}$ be a BOT with growth factor $\lambda$
	containing $N$ entries. If $\lambda = \Omega\left(\log_{\frac{M}{B}} N +
	\log\log M\right)$, then $\mathcal{B}$ is an optimal dictionary.
\end{corollary}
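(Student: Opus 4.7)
The plan is to derive the corollary as a direct substitution into \Cref{thm:hash-tree-cost} and then verify that the resulting bounds lie exactly on the lower-bound curve of \Cref{thm:ip-lower-bound}. Since the hard work has already been done in establishing BOT's insertion and query costs, this is essentially a parameter-matching argument.

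First, I would invoke \Cref{thm:hash-tree-cost} to obtain that the BOT $\mathcal{B}$ supports insertions in amortized cost $O\!\left(\left(\grow + \log_{M/B} N + \log\log M\right)/B\right)$ IOs per entry and queries in $O(\log_\grow N)$ IOs w.h.p. Under the hypothesis $\grow = \Omega\!\left(\log_{M/B} N + \log\log M\right)$, the $\grow$ term dominates the other two inside the parentheses, so the insertion cost simplifies to $O(\grow/B)$ amortized IOs per entry. The query cost is unchanged and remains $O(\log_\grow N)$.

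Next, I would compare these bounds against \Cref{thm:ip-lower-bound}, which states that \emph{any} external memory dictionary supporting insertions in $O(\grow/B)$ amortized IOs must perform queries in $\Omega(\log_\grow N)$ expected IOs. Since the BOT attains insertion cost $O(\grow/B)$ and query cost $O(\log_\grow N)$, and these exactly match the upper and lower ends of the tradeoff curve, $\mathcal{B}$ is optimal in the sense of that theorem.

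There is no real obstacle here: the corollary is a one-line consequence of \Cref{thm:hash-tree-cost} combined with \Cref{thm:ip-lower-bound}. The only thing worth noting explicitly is that the condition $\grow = \Omega(\log_{M/B} N + \log\log M)$ is precisely the threshold at which the merge-related terms $\log_{M/B} N$ and $\log\log M$ stop being the bottleneck in the insertion cost of \Cref{thm:hash-tree-cost}, so the hypothesis is exactly what is needed to make the BOT meet the optimal tradeoff.
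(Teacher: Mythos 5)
Your proposal is correct and is precisely the argument the paper intends: the paper states the corollary with ``It follows that:'' and no further proof, relying on exactly the substitution into \Cref{thm:hash-tree-cost} and comparison with \Cref{thm:ip-lower-bound} that you spell out. Nothing is missing.
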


\section{Cache-Oblivious BOTs}\label{sec:cacheoblivious}

In this section, we show how to modify a BOT to be cache oblivious.  We call
the resulting structure a cache-oblivious hash tree (COBOT). 

Much of the structure of the BOT translates directly into the cache-oblivious
model. However, some changes are necessary. In particular, when the series of
character queues are merged, this merge must be performed cache-obliviously
using funnels~\cite{DBLP:conf/focs/FrigoLPR99}, rather than with an (up to)
$M/B$-way merge. Also, the log cannot be buffered into sections of size
$O(B\log_\lambda N)$, and so instead they are buffered into sections of
constant size, items are immediately added to routing filter, and the extra IOs
are eliminated by optimal caching.

When an insertion is made into a CO hash tree, its fingerprint-value pair is
appended to the log, and it is immediately inserted into level 1. Thus, the
leaves of the routing trees point to single entries in the log.

The series of the character queues must be placed more carefully as well. In
particular, for each $j$, the runs of series $\sigma_j$ must be laid out
back-to-back, so that even when they are short, they may be read efficiently
across the level.

The series are merged using a \defn{partial funnelsort}. Funnelsort is a
cache-oblivious sorting algorithm that makes use of
$K$-funnels~\cite{DBLP:conf/focs/FrigoLPR99}. A $K$-funnel is a CO data structure
that merges $K$ sorted lists of total length $N$. We make use of the the
following lemma.

\begin{lemma}[\cite{DBLP:conf/focs/FrigoLPR99}]\label[lemma]{lem:funnel}
	A $K$-funnel merges $K$ sorted lists of total length $N\geq K^3$ in
	$O\left(\frac{N}{B}\log_{M/B} \frac{N}{B} + K + \frac{N}{B}\log_{K}
	\frac{N}{B}\right)$ IOs, provided the tall cache assumption that
	$M=\Omega(B^2)$ holds.
\end{lemma}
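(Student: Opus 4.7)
The plan is to prove the bound by a recursive construction of the $K$-funnel together with a level-by-level IO accounting that exploits a cache-oblivious van Emde Boas layout. I would build a $K$-funnel from a top $\sqrt{K}$-funnel that reads from $\sqrt{K}$ buffers of size $\Theta(K^{3/2})$, each filled by one of $\sqrt{K}$ bottom $\sqrt{K}$-funnels. Solving the recurrence $f(K) = (\sqrt{K}+1)\, f(\sqrt{K}) + \Theta(K^2)$ gives total space $O(K^2)$. Combined with the tall cache assumption $M = \Omega(B^2)$, this guarantees a threshold size $J = \Theta(\sqrt{M})$ at which an entire $J$-funnel, together with one cache block from each of its $J$ inputs, fits in cache.

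First I would lay the funnel out in van Emde Boas order so that each sub-funnel is stored contiguously at every recursion level. This is the ingredient that makes the analysis cache oblivious: without knowing $M$ or $B$, the optimal caching policy can keep any contiguously stored sub-funnel of size $O(M)$ resident. Second, I would establish the in-cache base case: once a $J$-funnel is loaded, pushing $n$ elements through it costs $O(n/B)$ IOs for streaming the inputs and outputs, plus one initial IO per input to cold-load the first block.

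Third, I would amortize the cost across the recursion levels above the in-cache threshold. A buffer fill at recursion level $i$ outputs $\Theta(K_i^{3/2})$ elements and can be charged $O(\text{output}/B)$ IOs, so each level moves $\Theta(N)$ elements at total cost $O(N/B)$. Each level of recursion squares the funnel size, so above the base case there are $O(\log_{M/B}(N/B))$ levels when $K \geq M/B$, yielding the $O((N/B)\log_{M/B}(N/B))$ term; when $K < M/B$, the recursion instead bottoms out after $O(\log_K(N/B))$ levels, yielding the $O((N/B)\log_K(N/B))$ term. The additive $K$ absorbs the one-time cold-load of one block per input stream at the base.

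The main obstacle is justifying the amortization: each firing of a level-$i$ sub-funnel must actually produce $\Theta(K_i^{3/2})$ output elements, not just a handful, so that the IO cost per output element stays $O(1/B)$. This requires the invariant that no buffer is starved of input before its producer fires to refill it, which is where the hypothesis $N \geq K^3$ enters---it guarantees enough total data in the pipeline that every buffer remains at least a constant fraction full between fires, so that the telescoping sums across levels really do collapse to the claimed bound.
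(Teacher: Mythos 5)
This lemma is cited by the paper from Frigo, Leiserson, Prokop, and Ramachandran (FOCS~1999); the paper supplies no proof of its own, so there is no internal argument for your reconstruction to be measured against. Your setup matches the standard FLPR funnel analysis and is correct: the recursive decomposition into a top $\sqrt{K}$-funnel and $\sqrt{K}$ bottom $\sqrt{K}$-funnels linked by buffers of size $\Theta(K^{3/2})$, the space recurrence $f(K)=(\sqrt{K}+1)f(\sqrt{K})+\Theta(K^2)=O(K^2)$, the van Emde Boas layout that keeps each sub-funnel contiguous, and the threshold $J=\Theta(\sqrt{M})$ at which, under $M=\Omega(B^2)$, a $J$-funnel together with one block per input stream fits in cache.

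The amortization step, however, has a genuine gap. You charge a buffer fill at recursion level $i$ a cost of $O(\ell/B)$ IOs where $\ell$ is the number of output elements, and you assert that there are $O(\log_{M/B}(N/B))$ recursion levels above the base case. Neither half holds up. Since each recursion level squares the funnel size, the number of levels between $J=\Theta(\sqrt M)$ and $K$ is $\Theta(\log\log K-\log\log J)$, which is doubly logarithmic---taken literally your accounting would yield a bound with a $\log\log$ factor, strictly stronger than the truth, which is a signal the accounting is off rather than a bonus. The reason is that the charge $O(\ell/B)$ per buffer fill is only justified when the sub-funnel doing the filling is entirely cache-resident; at recursion levels above the threshold, filling a buffer requires recursive sub-funnel invocations that incur their own IOs, so that fill is not a streaming operation and cannot be priced at $O(1/B)$ per output element. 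The correct FLPR accounting charges per traversal of a base-case $J$-funnel: an element passes through $\Theta((\log K)/(\log J))=\Theta(\log_M K)$ base-case funnels on its leaf-to-root path, which is $O(\log_{M/B}(N/B))$ after using $K\le N^{1/3}$ (from $N\ge K^3$) and the tall-cache assumption to identify $\log M$ with $\log(M/B)$ up to constants; and each such traversal costs $O(1/B)$ amortized per element because a base-case funnel emits $\Omega(M^{3/2})$ elements per invocation, dominating the $O(M/B+J)$ cost of cold-loading it and one block from each of its inputs. Your closing point about $N\ge K^3$ keeping the pipeline full so firings produce $\Theta(K_i^{3/2})$ output is a correct concern, but the same hypothesis is also what supplies the $K\le N^{1/3}$ bound used above and what ensures the additive $K$ term is absorbed.
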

	
The partial funnelsort used to merge $K$ runs of a series with total length $L$
(in words) performs a single merge with a $K$-funnel if $L \geq K^3$ and
recursively merges the run in groups of $K^{1/3}$ runs otherwise.

\begin{corollary}\label[corollary]{cor:funnelsort}
	A partial funnelsort merges $K$ runs of total word length $L$ in
	$O\left(\frac{L}{B}\log_{M/B} \frac{L}{B} + \frac{L}{B}\log_{K}
	\frac{L}{B}\right)$ IOs, provided the tall cache assumption that
	$M=\Omega(B^2)$ holds.
\end{corollary}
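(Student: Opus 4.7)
The plan is to prove \Cref{cor:funnelsort} by case analysis on the two branches of the partial funnelsort, together with induction on the recursion depth. The key observation is that the stated bound differs from \Cref{lem:funnel} in that the additive $K$ term has been dropped, so the proof must show that this term can either be absorbed (when the single-funnel branch is taken) or avoided (when we recurse on smaller arity).

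First I would handle the base case $L \geq K^3$, where the algorithm invokes a single $K$-funnel. \Cref{lem:funnel} yields $O\!\left(\frac{L}{B}\log_{M/B}\frac{L}{B} + K + \frac{L}{B}\log_K \frac{L}{B}\right)$ IOs. From $L \geq K^3$ we get $K \leq L^{1/3}$, and under the tall-cache assumption $M = \Omega(B^2)$ together with $L$ being at least a modest number of blocks, this forces $K = O(L/B)$, so the $K$ term is subsumed by $\frac{L}{B}\log_{M/B}\frac{L}{B}$.

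Second I would handle the recursive case $L < K^3$, where the algorithm partitions the $K$ runs into groups of $K^{1/3}$ runs, recursively merges each group with a partial funnelsort, and then combines the resulting merged runs. The inductive hypothesis applied to each subproblem of total length $L_i$ gives $O\!\left(\frac{L_i}{B}\log_{M/B}\frac{L_i}{B} + \frac{L_i}{B}\log_{K^{1/3}}\frac{L_i}{B}\right)$ IOs. Summing over $i$ with $\sum_i L_i = L$, using $\log(L_i/B) \leq \log(L/B)$ and the identity $\log_{K^{1/3}} = 3\log_K$, collapses the per-level cost (up to a constant) to the claimed bound. Since the arity shrinks geometrically from $K$ to $K^{1/3}$ to $K^{1/9}$, the recursion depth is $O(\log\log K)$, and the costs telescope into the desired form.

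The main obstacle is bookkeeping the additive $K$-type terms that \Cref{lem:funnel} contributes at every internal funnel invocation; they do not appear in the corollary and a priori could accumulate across the recursion tree. I would argue globally that the sum of these additive terms is dominated by $O(L/B)$: at each recursion level the effective arity is bounded by the cube root of the local length by construction, so the per-node additive term is always $O(\mathrm{local\ length}/B)$, and the total remains within the first term of the bound. Carefully verifying that the $\log_{M/B}$ factor is not inflated by the recursion — which relies on using $\log(L_i/B) \leq \log(L/B)$ rather than a convexity bound — is the delicate technical step.
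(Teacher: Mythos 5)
Your approach is essentially the same as the paper's: identify that the tall cache assumption forces $L = \Omega(B^2)$ outside the base case (since smaller inputs fit in memory), so the precondition $K \leq L^{1/3}$ of \Cref{lem:funnel} gives $K = O(L/B)$ and the additive $K$ term is absorbed; then argue the recurrence is dominated by the funnel-merge costs. The paper's proof is three sentences and states exactly these two facts, leaving the recurrence-solving implicit, just as you do.

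One caution worth noting, which applies to your write-up and is also not addressed by the paper's terse proof: the claim that the recursive costs \enquote{telescope into the desired form} is not automatic. When the inductive hypothesis is applied at arity $K^{1/3}$, the identity $\log_{K^{1/3}} = 3\log_K$ multiplies the second term by $3$, and if the recursion had depth $\Theta(\log\log K)$ this factor would compound to a polylogarithmic blow-up rather than a constant. For the bound to hold with a uniform constant one needs to use that the subproblem lengths $L_i$ shrink geometrically (so the $\log_{M/B}(L_i/B)$ factors form a convergent series) and that the recursion bottoms out quickly once subproblems either fit in memory or satisfy $L_i \geq K_i^3$; simply bounding $\log(L_i/B) \leq \log(L/B)$ and counting levels, as you sketch, does not suffice. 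Since the paper also waves at this with \enquote{the recurrence is dominated by the cost of the funnel merges,} I would not call this a deviation from the paper's argument, but it is the one place where your proof (and theirs) leans on an unstated calculation.
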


\begin{proof}
	The base case of the recursion occurs either when there is only 1 list
	remaining or the remaining lists fit in memory. In any other case of the
	recursion, since $L=\Omega(B^2)$ by the tall cache assumption, the $K$ term
	in \Cref{lem:funnel} is dominated.

	The recurrence is dominated by the cost of the funnel merges, which yields
	the result.
\end{proof}

\begin{theorem}
	If $M=\Omega(B^2)$, then a CO hash tree with $N$ entries and growth factor
	$\lambda$ has amortized insertion cost
	$\Theta\left(\frac{1}{B}\left(\lambda + \log\log M + \log_{M/B}
	N/B\right)\right)$ and query cost $\Theta\left(\log_\lambda N\right)$,
	w.h.p.
\end{theorem}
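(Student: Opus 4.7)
The plan is to mirror the analysis of \Cref{thm:hash-tree-cost} for the BOT, substituting the cache-oblivious primitives where needed and verifying that each substitution preserves the asymptotic bounds. The work splits naturally into (i) the routing-filter update cost, (ii) the character-queue update cost with partial funnelsort, (iii) the query cost given that the log is now unbuffered, and (iv) combining these with the false-positive analysis.

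First I would observe that the routing-filter structure itself is cache-oblivious: \Cref{lem:bot-routing-filter-update} and \Cref{lem:bot-routing-filter-lookup} only assert sequential reads/writes of the encoded list and a constant number of probes, neither of which relies on knowing $B$ or $M$. So the total cost of maintaining routing filters across all $\log_\lambda N$ levels is still $O(\lambda/B)$ amortized per entry in the CO model, provided the series layout places each level's filter contiguously (which we can do since each filter is rebuilt during a merge).

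Next I would redo \Cref{lem:character-queue-update} using \Cref{cor:funnelsort} in place of the $M/B$-way merge. Each merge of series $\sigma_j$ now sorts $\lambda^{2^{\rho(j)}}$ runs totaling $L_j = O(2^{\rho(j)} C_j)$ characters, where $C_j$ is the capacity of the level. Partial funnelsort charges $O((L_j/B)(\log_{M/B}(L_j/B) + \log_{\lambda^{2^{\rho(j)}}}(L_j/B)))$ IOs; since the two logarithmic factors are within constants of each other once $\lambda^{2^{\rho(j)}} \le M/B$ and otherwise the funnel term dominates, the per-character cost essentially matches the BOT expression $O\left(\frac{2^{\rho(j)}}{B}\lceil\log_{M/B}\lambda^{2^{\rho(j)}}\rceil\right)$. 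Summing over levels exactly as in the BOT proof and recalling there are $\log_\lambda N$ characters per word gives amortized insertion cost $O\left(\frac{1}{B}(\log_{M/B} N + \log\log M)\right)$ for the character queues, and adding in the routing-filter contribution of $O(\lambda/B)$ yields the insertion term of the theorem. This funnelsort substitution is the step that I expect to be the main obstacle, since one must verify that the tall-cache assumption keeps the $K$ additive term in \Cref{lem:funnel} dominated at every level and that summing the per-level funnelsort costs does not introduce a spurious extra logarithmic factor beyond the $\log\log M$ already present.

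Finally I would handle queries. A query descends one routing tree per level, which by \Cref{lem:bot-routing-filter-lookup} costs $O(1)$ IOs per node visited, so the true-positive path contributes $O(\log_\lambda N)$ IOs. For false positives, \Cref{lem:routing-tree-false-positive} (which is purely combinatorial and carries over unchanged) plus \Cref{lem:chernoff} with $\delta=\lambda$ bound the expected extra node accesses by $O(\log_\lambda N)$ w.h.p., exactly as in the proof of \Cref{thm:hash-tree-cost}. The new subtlety is that the log is no longer buffered into size-$B\log_\lambda N$ blocks, so a leaf accesses individual entries rather than a whole block; but since at most $O(\log_\lambda N)$ leaves are reached w.h.p. and each leaf read is a single entry, the optimal offline cache can coalesce these reads into $O(\log_\lambda N / B + \log_\lambda N)$ IOs, which absorbs into the $O(\log_\lambda N)$ bound. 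Combining with the $\Omega$ direction (inherited from \Cref{thm:ip-lower-bound}) gives the claimed $\Theta$-bounds and completes the proof.
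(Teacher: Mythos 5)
Your plan correctly identifies the three pieces that must be adapted from the BOT analysis, and the handling of the character queues via \Cref{cor:funnelsort} and of false positives via \Cref{lem:routing-tree-false-positive} matches the paper's proof. However, you skip the one step that is actually new in the COBOT, and it is exactly where the paper spends its effort.

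In the BOT, an insertion is first appended to the log and only enters level $1$ once a full buffer of $B\log_\lambda N$ items has accumulated; that buffering is what makes the per-item cost of the log append and of building the level-$1$ routing tree equal to $O(1/B)$. In the COBOT this buffering is deliberately removed (one cannot size a buffer to $B\log_\lambda N$ without knowing $B$), and each insertion is appended to the log and inserted into level $1$ \emph{immediately}. Your proposal asserts that the routing-filter update cost is still $O(\lambda/B)$ amortized because the structures are read and written sequentially, but that argument only covers merge-time rebuilds of filters at higher levels. It does not explain why touching the log tail and the level-$1$ routing tree once per insertion costs $O(1/B)$ rather than $\Omega(1)$ IOs. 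The paper closes this gap by invoking the optimal replacement policy of the cache-oblivious model: it assumes the cache retains the last $B$ log entries together with the subtree of the routing tree rooted at their least common ancestor, so that all of those per-insertion touches are cache hits and amortize to $O(1/B)$ per item. Without some version of this caching argument, your insertion bound does not follow, particularly when $\lambda = o(B)$ so that the claimed per-insertion cost is $o(1)$.

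Your query analysis is essentially the paper's, though the bookkeeping is slightly off: the dominant $O(\log_\lambda N)$ term comes from the routing-tree node accesses (bounded w.h.p.\ via \Cref{lem:chernoff}), and the leaf-level scans get \emph{cheaper} in the COBOT since a leaf now covers $O(1)$ fingerprints instead of a $B\log_\lambda N$-sized block; the paper states this directly rather than appealing to the cache coalescing reads across leaves. Also note the paper explicitly requires the runs of each series $\sigma_j$ to be laid out back-to-back so that short runs can still be read efficiently; you mention contiguity only for routing filters, and the same issue arises for the character queues.
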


\begin{proof}
	We may assume that the caching algorithm sets aside enough memory that the
	last $B$ items in the log, together with the subtree rooted at their least
	common ancester, are cached. Thus the log is updated at a per-item cost of
	$O(1/B)$.

	The proof of \Cref{thm:hash-tree-cost} now carries over to the CO hash
	tree. The routing filters are updated the same way, and the cost of
	updating the character queues is unchanged, by \Cref{cor:funnelsort}. 

	Queries are performed as in \Cref{sec:bot-query}, except that now the level
	1 nodes cover $O(1)$ fingerprints, but the depth of the tree is unchanged,
	so the cost is the same.
\end{proof}

\end{document}